\definecolor{myurlcolor}{rgb}{0.6,0,0}
\definecolor{mycitecolor}{rgb}{0,0,0.8}
\definecolor{myrefcolor}{rgb}{0,0,0.8}
\newcommand{\FinStoch}{\mathtt{FinStoch}}
\newcommand{\FinStat}{\mathtt{FinStat}}
\newcommand{\FinProb}{\mathtt{FinProb}}
\newcommand{\FP}{\mathtt{FP}}
\newcommand{\Top}{\mathtt{Top}}
\newcommand{\Set}{\mathtt{Set}}
\newcommand{\Cat}{\mathtt{Cat}}
\newcommand{\E}{\mathtt{E}}
\newcommand{\tends}{\to}       
\newcommand{\define}[1]{{\bf \boldmath{#1}}}
\newcommand{\maps}{\colon}
\newcommand{\RE}{\mathrm{RE}}
\renewcommand{\P}{\mathrm{P}}
\renewcommand{\O}{\mathrm{O}}
\newcommand{\R}{\mathbb{R}}
\newcommand{\supp}{\mathrm{supp}}
\newcommand{\beq}{\begin{equation}}
\newcommand{\eeq}{\end{equation}}
\theoremstyle{plain}
\newtheorem{thm}{Theorem}
\newtheorem*{unthm}{Unproved ``Theorem''}
\newtheorem{lem}[thm]{Lemma}
\newtheorem{prop}[thm]{Proposition}
\newtheorem{defn}[thm]{Definition}
\theoremstyle{definition}
\theoremstyle{remark}
\numberwithin{equation}{section}
\begin{document}



\title[Relative entropy]{A Bayesian characterization \\ of relative entropy}

\author{John C.~Baez}
\address{Department of Mathematics\\ 
University of California\\ 
Riverside CA 92521\\
USA \\
and Centre for Quantum Technologies\\ 
National University of Singapore\\ 
Singapore 117543}
\email{baez@math.ucr.edu}
\author{Tobias Fritz}
\address{Perimeter Institute for Theoretical Physics \\
31 Caroline St. N, Waterloo, Ontario N2L 2Y5, Canada}
\email{tfritz@perimeterinstitute.ca}

\keywords{}

\subjclass[2010]{Primary  94A17, Secondary 62F15, 18B99}

\begin{abstract}
We give a new characterization of relative entropy, also known as the Kullback--Leibler divergence. We use a number of interesting categories related to probability theory.   In particular, we consider a category $\FinStat$ where an object is a finite set equipped with a probability distribution, while a morphism is a measure-preserving function $f \maps X \to Y$ together with a stochastic right inverse $s \maps Y \to X$.   The function $f$ can be thought of as a measurement process, while $s$ provides a hypothesis about the state of the measured system given the result of a measurement.  Given this data we can define the entropy of the probability distribution on $X$ relative to the `prior' given by pushing the probability distribution on $Y$ forwards along $s$.  We say that $s$ is `optimal' if these distributions agree.  We show that any convex linear, lower semicontinuous functor from $\FinStat$ to the additive monoid $[0,\infty]$ which vanishes when $s$ is optimal must be a scalar multiple of this relative entropy.   Our proof is independent of all earlier characterizations, but inspired by the work of Petz.  
\end{abstract}

\maketitle

\setcounter{tocdepth}{1}
\tableofcontents

\section{Introduction}
\label{introduction}

This paper gives a new characterization of the concept of relative entropy, also known as `relative information', `information gain' or `Kullback-Leibler divergence'.   Whenever we have two probability distributions $p$ and $q$ on the same finite set $X$, we define the information of $q$ relative to $p$ as:
\[
S(q,p) = \sum_{x\in X} q_x \ln\left( \frac{q_x}{p_x} \right) 
\]
Here we set $q_x \ln(q_x/p_x)$ equal to $\infty$ when $p_x = 0$, unless $q_x$ is also zero, in which case we set it equal to 0.  Relative entropy thus takes values in $[0,\infty]$.  

Intuitively speaking, $S(q,p)$ is the expected amount of information gained when we discover the probability distribution is really $q$, when we had thought it was $p$.   We should think of $p$ as a `prior'.  When we take $p$ to be the uniform distribution on $X$, relative entropy reduces to the ordinary Shannon entropy, up to a sign and an additive constant. The advantage of relative entropy is that it makes the role of the prior explicit.
   
Since Bayesian probability theory emphasizes the role of the prior, relative entropy naturally lends itself to a Bayesian interpretation~\cite{IB}. Our goal here is to make this 
precise in a mathematical characterization of relative entropy.  We do this using a 
category $\FinStat$ where:
\begin{itemize}
\item an object $(X,q)$ consists of a finite set $X$ and a probability distribution $x\mapsto q_x$ on that set;
\item a morphism $(f,s) \maps (X,q) \to (Y,r)$ consists of a measure-preserving 
function $f$ from $X$ to $Y$, together with a probability distribution $x\mapsto s_{xy}$ on $X$ for each element $y \in Y$ with the property that $s_{xy} = 0$ unless $f(x) = y$.
\end{itemize}
We can think of an object of $\FinStat$ as a system with some finite set of 
\define{states} together with a probability distribution on its states.   A morphism $(f,s) 
\maps (X,q) \to (Y,r)$  then consists of two parts.  First, there is a deterministic 
`measurement process' $f \maps X \to Y$ mapping states of some system being measured to states of a `measurement apparatus'.  The condition that $f$ be measure-preserving says that the probability that the apparatus winds up in some state $y \in Y$ 
is the sum of the probabilities of states of $X$ leading to that outcome:
\[     r_y = \sum_{x : \; f(x) = y} q_x  .\]
Second, there is a `hypothesis' $s$: an assumption about the probability 
$s_{xy}$ that the system being measured is in the state $x$ given any 
measurement outcome $y \in Y$.  We assume that this probability vanishes 
unless $f(x) = y$, as we would expect from a hypothesis made by someone who knew the behavior of the measurement apparatus.

Suppose we have any morphism $(f,s) \maps (X,q) \to (Y,r)$  in $\FinStat$.   From 
this we obtain two probability distributions on the states of the system being measured.   First, we have the probability distribution $p \maps X \to \R$ given by
\begin{equation}
\label{eq:prior}
 p_x = s_{x\, f(x)} r_{f(x)}.
\end{equation}
This is our `prior', given our hypothesis and the probability distribution of measurement outcomes.  Second, we have the `true' probability distribution $q \maps X \to \R$.  It follows that any morphism in $ \FinStat$ has a relative entropy $S(q,p)$ associated to it.   This is the expected amount of information we gain when we update our prior $p$ to $q$.

In fact, this way of assigning relative entropies to morphisms defines a functor
\[  \RE \maps \FinStat \to [0,\infty] \] 
where we use $[0,\infty]$ to denote the category with one object, the nonnegative real numbers together with $\infty$ as morphisms, and addition as composition.   More precisely, if $(f,s) \maps (X,q) \to (Y,r)$ is any morphism in $\FinStat$, we define
\[   \RE(f,s) = S(q,p)  \]
where the prior $p$ is defined as in Equation \eqref{eq:prior}.  The fact that $\RE$ is a functor is nontrivial and rather interesting.   It says that given any composable pair of measurement processes:
\[    (X,q) \stackrel{(f,s)}{\longrightarrow} (Y,r) \stackrel{(g,t)}{\longrightarrow} (Z,u) \]
the relative entropy of their composite is the sum of the relative entropies
of the two parts:
\[     \RE((g,t) \circ (f,s)) = \RE(g,t) + \RE(f,s) .\]
We prove that $\RE$ is a functor in Section \ref{sec:charent}.  However, we go much further: we characterize relative entropy by saying that up to a constant multiple, $\RE$ is the unique functor from $\FinStat$ to $[0,\infty]$ obeying three reasonable conditions.

The first condition is that $\RE$ vanishes on morphisms  $(f,s) \maps (X,q) \to (Y,r)$   where the hypothesis $s$ is `optimal'.  By this, we mean that Equation \eqref{eq:prior} gives a prior $p$ equal to the `true' probability distribution $q$ on the states of the system being measured.  

The second condition is that $\RE$ is lower semicontinuous. The set $P(X)$ of probability distributions on a finite set $X$ naturally has the topology of an $(n-1)$-simplex when $X$ has $n$ elements.   The set $[0,\infty]$ can be given the topology induced by the usual order on this set, and it is then homeomorphic to a closed interval.   However, with these topologies, the relative entropy does not define a continuous function
\[       \begin{array}{rcl}         S \maps P(X) \times P(X) &\to& [0,\infty]  \\
                                                          (q,p) &\mapsto & S(q,p) .
\end{array}
\]
The problem is that 
\[
S(q,p) = \sum_{x\in X} q_x \ln\left( \frac{q_x}{p_x} \right) 
\]
and $q_x \ln(q_x/p_x)$ equals $\infty$ when 
$p_x = 0$ and $q_x > 0$, but $0$ when $p_x = q_x = 0$.  So, it turns out that $S$ is only lower semicontinuous, meaning that it can suddenly jump down, but not up.
More precisely,  if $p^i , q^i \in P(X)$ are sequences with $p^i \to p$, $q^i \to q$,
then
\[       S(q,p) \le \liminf_{i \to \infty} S(q^i, p^i)  .\]
In Section \ref{sec:charent} we give the set of morphisms in $\FinStat$ a topology, and show that with this topology, $\RE$ maps morphisms to morphisms in a lower semicontinuous way.  

The third condition is that $\RE$ is convex linear.   In Section \ref{sec:charent} we describe how to take convex linear combinations of morphisms in $\FinStat$.   
The functor $\RE$ is convex linear in the sense that it maps any convex linear combination of morphisms in $\FinStat$ to the corresponding convex linear combination of numbers in $[0,\infty]$.  Intuitively, this means that if we flip a probability-$\lambda$ coin to decide whether to perform one measurement process or another, the expected information gained is $\lambda$ times the expected information gain of the first process plus $(1 - \lambda)$ times the expected information gain of the second.

Our main result is Theorem \ref{thm1}: any lower semicontinuous, convex linear 
functor
\[  F \maps \FinStat \to [0,\infty] \]
that vanishes on morphisms with an optimal hypothesis must equal some constant 
times the relative entropy.  In other words, there exists some constant $c \in [0,\infty]$ such that 
\[      F(f,s) = c\, \RE(f,s)  \]
for any morphism $(f,s) \maps (X,p) \to (Y,q)$ in $\FinStat$.

This theorem, and its proof, was inspired by results of Petz \cite{Petz}, who sought
to characterize relative entropy both in the `classical' case discussed here and in the more general `quantum' setting.  Our original intent was merely to express his results in a more category-theoretic framework.  Unfortunately his work contained a flaw, which we had to repair.  As a result, our proof is now self-contained.  For details, see the remarks after Theorem \ref{thm:petz}.

Our characterization of relative entropy implicitly relies on topological categories and on the operad whose operations are convex linear combinations.  However, since these structures are not strictly necessary for stating or proving our result, and they may be unfamiliar to some readers, we discuss them only in Appendix \ref{app:semicont} and Appendix \ref{app:convalgs}.

\section{The categories in question} 
\label{categories}

\subsection{FinStoch}
To describe the categories used in this paper, we need to start with a word on the category of finite sets and stochastic maps.  A stochastic map $ f \maps X \leadsto Y$ is different from an ordinary function, because instead of assigning a unique element of $ Y$ to each element of $ X,$ it assigns a \textit{probability distribution} on $ Y$ to each element of $ X.$   Thus $ f(x)$ is not a specific element of $Y,$ but instead has a probability of taking on different values. This is why we use a wiggly arrow to denote a stochastic map.

More formally:

\begin{defn}  Given finite sets $X$ and $Y$, a \define{stochastic map} $ f \maps X \leadsto Y$ assigns a real number $ f_{yx}$ to each pair $ x \in X, y \in Y$ in such a way that fixing any element $x$, the numbers $ f_{yx}$ form a probability distribution on $ Y.$  We call $ f_{yx}$ \define{the probability of $ y$ given $x$. }  \end{defn}

In more detail, we require that the numbers $f_{yx}$ obey:

\begin{itemize}
\item
$ f_{yx} \ge 0$ for all $ x \in X,$ $ y \in Y$, 
\vskip 1em
\item
$ \displaystyle{ \sum_{y \in Y} f_{yx} = 1}$ for all $ x \in X.$ 
\end{itemize}
Note that we can think of $ f \maps X \leadsto Y$ as a $Y \times X$-shaped matrix of numbers.  A matrix obeying the two properties above is called \define{stochastic}.  This viewpoint is nice because it reduces the problem of composing stochastic maps to matrix multiplication.  It is easy to check that multiplying two stochastic matrices gives a stochastic matrix.  So, we define the composite of stochastic maps $f \maps X \leadsto Y$ and $g \maps Y \leadsto Z$ by
\[        (g \circ f)_{zx} = \sum_{y \in Y} g_{zy} f_{yx}  .\]
Since matrix multiplication is associative and identity matrices are stochastic, this construction gives a category:

\begin{defn}  Let \define{$\FinStoch$} be the category of finite sets and stochastic maps between them.  
\end{defn}

We are restricting attention to finite sets merely to keep the discussion simple and avoid issues of convergence. It would be interesting to generalize all our work to more general probability spaces.

\subsection{FinProb}
\label{FinProb}

Choose any 1-element set and call it $\mathbf{1}$.  A function $f \maps \mathbf{1} \to X$ is just a point of $X$.  But a stochastic map $q \maps \mathbf{1} \leadsto X$ is something more interesting: it is a probability distribution on $X$.

We use the term \define{finite probability measure space} to mean a finite set with a probability distribution on it.  As we have just seen, there is a very quick way to describe such a thing within $\FinStoch$:
\[  
\xymatrix{ \mathbf{1} \ar@{~>}[dd]_q \\ \\
X   
} 
\]
This gives a quick way to think about a measure-preserving function between finite probability measure spaces!  It is simply a commutative triangle like this:
\[ 
\xymatrix{  & \mathbf{1} \ar@{~>}[ddl]_q \ar@{~>}[ddr]^r \\ \\
X \ar[rr]_f & & Y  
} 
\]
Note that the horizontal arrow $ f \maps X \to Y$ is not wiggly. The straight arrow means it is an honest function, not a stochastic map.  But a function can be seen as a special case of a stochastic map.  So it makes sense to compose a straight arrow with a wiggly arrow---and the result is, in general, a wiggly arrow.  If we then demand that the above triangle commute, this says that the function $ f \maps X \to Y$ is measure-preserving.  

We now work through the details. First: how can we see a function as a special case of a stochastic map?  A function $ f \maps X \to Y$ gives a matrix of numbers
\[ f_{yx} = \delta_{y\, f(x)} \]
where $\delta$ is the Kronecker delta.  This matrix is stochastic, and it defines a stochastic map sending each point $ x \in X$ to the probability distribution supported at $f(x)$. 

Given this, we can see what the commutativity of the above triangle means.  If we use $		q_x$ to stand for the probability that $q \maps 1 \leadsto X$ assigns to each element
$x\in X$, and similarly for $r_y$, then the triangle commutes if and only if
\[ \displaystyle{ r_y = \sum_{x \in X} \delta_{y \, f(x)} q_x} \] 
or in other words:
\[  \displaystyle{ r_y = \sum_{x : \; f(x) = y} q_x } \]
In this situation we say $p$ is $q$ \define{pushed forward along} $f$, and that $f$ is a \define{measure-preserving function}.

So, we have used $ \FinStoch$ to describe another important category:

\begin{defn}  Let \define{$\FinProb$} be the category of finite probability measure spaces and measure-preserving functions between them. \end{defn}

Another variation may be useful at times:
\[ 
\xymatrix{  & \mathbf{1} \ar@{~>}[ddl]_q \ar@{~>}[ddr]^r \\ \\
X \ar@{~>}[rr]_f & & Y  
} 
\]
A commuting triangle like this is a \define{measure-preserving stochastic map}.  In other words, $q$ gives a probability measure on $ X,$ $r$ gives a probability measure on $Y$, and $ f \maps X \leadsto Y$ is a stochastic map that is measure-preserving in the following sense:
\[ \displaystyle{ r_y = \sum_{x \in X} f_{yx} q_x } . \]

\subsection{FinStat} 
\label{FinStat}

The category we need for our characterization of relative entropy is a bit more subtle.  In this category, an object is a finite probability measure space:
\[  
\xymatrix{ \mathbf{1} \ar@{~>}[dd]_q \\ \\
X   
} 
\]
but a morphism looks like this:
\[ \xymatrix{  & \mathbf{1} \ar@{~>}[ddl]_q \ar@{~>}[ddr]^r \\ \\
X \ar@/_/[rr]_f & & Y  \ar@{~>}@/_/[ll]_s
}\]
\[  \begin{array}{ccl}      f \circ q &=& r \\  
        f \circ s &=& 1_Y  \end{array}  \]
The diagram need not commute, but the two equations shown must hold.  The first equation says that $ f \maps X \to Y$ is a measure-preserving function.  In other words, this triangle, which we have seen before, commutes:
\[ 
\xymatrix{  & \mathbf{1} \ar@{~>}[ddl]_q \ar@{~>}[ddr]^r \\ \\
X \ar[rr]_f & & Y  
} 
\]
The second equation says that $ f \circ s$ is the identity, or in other words, $s$ is a `section' for $f$.   This requires a bit of discussion.

We can think of $X$ as the set of `states' of some system, while $Y$ is a set of possible states of some other system: a `measuring apparatus'.  The function $f$ is a `measurement process'.  One `measures' the system using $f$, and if the system is in any state $x \in X$ the measuring apparatus goes into the state $f(x)$.  The probability distribution $q$ gives the probability that the system is in any given state, while $r$ gives the probability that the measuring apparatus ends up in any given state after a measurement is made.

Under this interpretation, we think of the stochastic map $s$ as a `hypothesis' about the system's state given the state of the measuring apparatus.  If one measures the system and the apparatus goes into the state $ y \in Y,$ this hypothesis asserts that the system is in the state $x$ with probability $s_{xy}.$  

The equation $f \circ s = 1_Y$ says that if the measuring apparatus ends up in some state $y \in Y$, our hypothesis assigns a nonzero probability only to states of the measured system for which a measurement actually leads to this state $y$:

\begin{lem} 
\label{lem:finstat}
If $f \maps X \to Y$ is a function between finite sets and $s \maps Y \leadsto X$ is a stochastic map, then $f \circ s = 1_Y$ if and only for all $y \in Y$,
$s_{xy} = 0$ unless $f(x) = y$.
\end{lem}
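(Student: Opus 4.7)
The plan is a direct unpacking of the definitions, using the matrix description of composition given earlier in the excerpt. I would begin by writing the function $f \maps X \to Y$ as a stochastic map with matrix entries $f_{y'x} = \delta_{y'\, f(x)}$, so that
\[
(f \circ s)_{y'y} = \sum_{x \in X} f_{y'x}\, s_{xy} = \sum_{x \,:\, f(x) = y'} s_{xy}.
\]
Demanding that $f \circ s = 1_Y$ is equivalent to requiring $(f \circ s)_{y'y} = \delta_{y'y}$ for all $y, y' \in Y$, i.e.
\[
\sum_{x \,:\, f(x) = y'} s_{xy} \;=\; \delta_{y'y} \qquad \text{for all } y, y' \in Y.
\]

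For the ``only if'' direction, I would fix $y \in Y$ and take $y' \neq y$; then $\sum_{x \,:\, f(x) = y'} s_{xy} = 0$. Since each $s_{xy} \geq 0$ (as $s$ is stochastic), every summand vanishes, so $s_{xy} = 0$ whenever $f(x) = y' \neq y$. Letting $y'$ range over $Y \setminus \{y\}$ covers all $x$ with $f(x) \neq y$, which is the desired conclusion.

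For the ``if'' direction, I would assume $s_{xy} = 0$ whenever $f(x) \neq y$. Then for any $y' \neq y$ the sum $\sum_{x \,:\, f(x) = y'} s_{xy}$ involves only terms $s_{xy}$ with $f(x) \neq y$, hence equals $0$. For $y' = y$, the fact that $s(\cdot,y)$ is a probability distribution on $X$ gives
\[
\sum_{x \,:\, f(x) = y} s_{xy} \;=\; \sum_{x \in X} s_{xy} \;=\; 1,
\]
since all the omitted terms vanish by hypothesis. This verifies $(f \circ s)_{y'y} = \delta_{y'y}$ for all $y,y'$, so $f \circ s = 1_Y$.

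There is no real obstacle here; the only subtlety worth flagging is the use of nonnegativity of stochastic matrix entries in the ``only if'' direction (a sum of nonnegatives vanishing forces each term to vanish), and the use of the normalization $\sum_x s_{xy} = 1$ in the ``if'' direction. Everything else is bookkeeping with Kronecker deltas.
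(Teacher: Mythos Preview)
Your proof is correct and essentially identical to the paper's own argument: both unpack $(f\circ s)_{y'y}=\sum_{x:f(x)=y'} s_{xy}$, use nonnegativity of the entries for the ``only if'' direction, and use the normalization $\sum_x s_{xy}=1$ for the ``if'' direction. There is nothing to add.
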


\begin{proof}  The condition $f \circ s = 1_Y$ says that for any fixed $y,y' \in Y$,
\[ \sum_{x : \; f(x) = y'} s_{xy} = \sum_{x \in X} \delta_{y'\, f(x)} s_{xy} = \delta_{y' y} .\]
It follows that the sum at left vanishes if $y' \ne y$.  If $s$ is stochastic, the terms in this sum are nonnegative.  So, $s_{xy}$ must be zero if $f(x) = y'$ and $y' \ne y$.

Conversely, suppose we have a stochastic map $s \maps Y \leadsto X$ such that $s_{xy} = 0$ unless $f(x) = y.$    Then for any $y \in Y$ we have
\[ 1 = \sum_{x \in X} s_{xy} =  \sum_{x : \; f(x) = y} s_{xy} = 
 \sum_{x \in X} \delta_{y\, f(x)} s_{xy} \]
while for $y' \ne y$ we have
\[ 0 = \sum_{x : \; f(x) = y'} s_{xy} = 
 \sum_{x \in X} \delta_{y'\, f(x)} s_{xy} ,\]
so for all $y, y' \in Y$
\[    \sum_{x \in X} \delta_{y'\, f(x)} s_{xy} =  \delta_{y' y} ,\]
which says that $f \circ s = 1_Y$.  
\end{proof}

It is also worth noting that $f \circ s = 1_Y$ implies that $f$ is onto: if $y \in Y$ were not in the image of $f$, we could not have
\[  \sum_{x \in X} s_{xy} = 1 \]
as required, since $s_{xy} = 0$ unless $f(x) = y.$   So, the equation $f \circ s = 1_Y$
also rules out the possibility that our measuring apparatus has `extraneous' states that never arise when we make a measurement.

This is how we compose morphisms of the above sort:
\[ \xymatrix{  && \mathbf{1} \ar@{~>}[ddll]_q  \ar@{~>}[dd]^r \ar@{~>}[ddrr]^u \\ \\
X \ar@/_/[rr]_f &&  Y  \ar@{~>}@/_/[ll]_s   \ar@/_/[rr]_g  &&   
Z  \ar@{~>}@/_/[ll]_t
}\]
\[   \begin{array}{cclcccl}       f \circ q &=& r & \;  & g \circ r &=& u  \\
       f \circ s &=& 1_Y   &\; & g \circ t &=& 1_Z \end{array} \]
We get a measure-preserving function $ g \circ f \maps X \to Z$ and a stochastic map going back, $ s \circ t \maps Z \to X.$  It is easy to check that these obey the required equations:
\[ g \circ f \circ q = u \]
\[ g \circ f \circ s \circ t = 1_Z \]
So, this way of composing morphisms gives a category, which we call
$\FinStat$, to allude to its role in statistical reasoning:

\begin{defn}  Let \define{$\FinStat$} be the category where an object is a finite probability measure space:
\[  
\xymatrix{ \mathbf{1} \ar@{~>}[dd]_q \\ \\
X   
} 
\]
a morphism is a diagram 
\[ \xymatrix{  & \mathbf{1} \ar@{~>}[ddl]_q \ar@{~>}[ddr]^r \\ \\
X \ar@/_/[rr]_f & & Y  \ar@{~>}@/_/[ll]_s
}\]
obeying these equations:
\[  \begin{array}{ccl}      f \circ q &=& r \\  
        f \circ s &=& 1_Y  \end{array}  \]
and composition is defined as above.
\end{defn}

\subsection{FP}
\label{FP}

We have described how to think of a morphism in $\FinStat$ as consisting of a `measurement process' $ f$ and a `hypothesis' $s$, obeying two equations: 
\[ \xymatrix{  & \mathbf{1} \ar@{~>}[ddl]_q \ar@{~>}[ddr]^r \\ \\
X \ar@/_/[rr]_f & & Y  \ar@{~>}@/_/[ll]_s
}\]
\[  \begin{array}{ccl}      f \circ q &=& r \\  
        f \circ s &=& 1_Y  \end{array}  \]
We say the hypothesis is \define{optimal} if also
\[ s \circ r = q .\] 
Conceptually, this says that if we take the probability distribution $r$ on our observations and use it to infer a probability distribution for the system's state using our hypothesis 
$s,$ we get the correct answer: $q$.   Mathematically, it says that this diagram commutes:
\[ \xymatrix{  & \mathbf{1} \ar@{~>}[ddl]_q \ar@{~>}[ddr]^r \\ \\
X  & & Y  \ar@{~>}[ll]_s
}\]
In other words, $s$ is a measure-preserving stochastic map.

It is easy to check that this optimality property is preserved by composition of morphisms. Hence there is a subcategory of $\FinStat$ with all the same objects, but only morphisms where the hypothesis is optimal:

\begin{defn}  Let \define{$\FP$} be the subcategory of $\FinStat$ where an object is a finite probability measure space
\[  
\xymatrix{   \mathbf{1} \ar@{~>}[dd]_q \\ \\
X   
} 
\]
and a morphism is a diagram 
\[ \xymatrix{  & \mathbf{1} \ar@{~>}[ddl]_q \ar@{~>}[ddr]^r \\ \\
X \ar@/_/[rr]_f & & Y  \ar@{~>}@/_/[ll]_s
}\]
obeying these equations:
\[  \begin{array}{ccl}      f \circ q &=& r \\  
        f \circ s &=& 1_Y \\
        s \circ r &=& q 
 \end{array}  \]
\end{defn}

The category $\FP$ was introduced by Leinster \cite{Leinster}.  He
gave it this name for two reasons.  First, it is a close relative of $\FinProb$, where 
a morphism looks like this:
\[ 
\xymatrix{  & \mathbf{1} \ar@{~>}[ddl]_q \ar@{~>}[ddr]^r \\ \\
X \ar[rr]_f & & Y  
} 
\]
We now explain the similarities and differences between $\FP$ and $\FinProb$ by studying the properties of the forgetful functor $\FP\to\FinProb$, which sends every morphism $(f,s)$ to its underlying measure-preserving function $f$.

For a morphism in $\FP$, the conditions on $s$ are so strong that they completely determine it, unless there are states of the measurement apparatus that happen with probability zero: that is, unless there are $y \in Y$ with $r_y = 0$.  To see this, note that 
\[ s \circ r = q \] 
says that
\[ \sum_{y \in Y} s_{xy} r_y = q_x \]
for any choice of $x \in X.$   But we have already seen in Lemma \ref{lem:finstat}
that $s_{xy} = 0$ unless $f(x) = y$, so the sum has just one term, and the equation says
\[ s_{xy} r_{y} = q_x \]
where $y = f(x)$. We can solve this for $s_{xy}$ unless $r_y = 0$.  Furthermore, we have already seen that every $y \in Y$ is of the form $f(x)$ for some $x \in X$.

Thus, for a morphism $(f,s) \maps (X,q) \to (Y,r)$ in $\FP$, we can solve for $s$ in terms of the other data unless there exists $y \in Y$ with $r_y = 0$.   Except for this special case, a morphism in $\FP$ is just a morphism in $\FinProb$.  But in this special case, a morphism in $\FP$ has a little extra information: an arbitrary probability distribution on the inverse image of each point $y$ with $r_y = 0$.  The point is that in $\FinStat$, and thus $\FP$, a `hypothesis' must provide a probability for each state of the system given a state of the measurement apparatus, even for states of the measurement apparatus that occur with probability zero.

A more mathematical way to describe the situation is that our functor $\FP\to\FinProb$ is `generically' full and faithful: the function 
\[
\begin{array}{ccc}
\FP((X,q),(Y,r)) & \longrightarrow& \FinProb((X,q),(Y,r)) \\
    (f,s)            & \mapsto & f 
\end{array}
\]
is a bijection if the support of $r$ is the whole set $Y$, which is the generic situation.

The second reason Leinster called this category $\FP$ is that it is  
freely formed from an operad called $\P$. This is a topological operad whose $n$-ary 
operations are probability distributions on the set $\{1, \dots, n\}$.  These
operations describe convex linear combinations, so algebras of this operad include convex subsets of $\mathbb{R}^n$, more general convex spaces \cite{Fritz}, 
and even more.   As Leinster explains \cite{Leinster}, the category $\FP$
(or more precisely, an equivalent one) 
is the 
free $\P$-algebra among categories containing an internal 
$\P$-algebra.  We will not need this fact here, but it
is worth mentioning that Leinster used this fact to characterize entropy as a 
functor from $\FP$ to $[0,\infty)$.  He and the authors then rephrased
this in simpler language \cite{BFL}, obtaining a characterization of entropy as a functor from $\FinProb$ to $[0,\infty)$.  The characterization of relative entropy
in the current paper is a closely related result.  However, the proof is completely different.

\section{Characterizing entropy}
\label{sec:charent}

\subsection{The theorem}

We begin by stating our main result.  Then we clarify some of the terms involved and 
begin the proof.

\begin{thm}
\label{thm1}
Relative entropy determines a functor
\beq
\label{entropyfunctor}
\begin{array}{ccl}
\RE \maps \FinStat &\to& [0,\infty]  \\  \\
\bigg( \xymatrix{ (X,q) \ar@/_/[rr]_f && (Y,r) \ar@{~>}@/_/[ll]_s } \bigg)  &\mapsto& S(q,s\circ r) \\
\end{array} \eeq
that is lower semicontinuous, convex linear, and vanishes on morphisms in the subcategory $\FP$.  

Conversely, these properties characterize the functor $\RE$ up to a scalar multiple.  In other words, if $F$ is another functor with these properties, then for some $0 \le c \le \infty$ we have $F(f,s) = c\, \RE(f,s)$ for all morphisms $(f,s)$ in $\FinStat$.  (Here we define $\infty \cdot a = a \cdot \infty = \infty$ for $0 < a \le \infty$, but $\infty \cdot 0 = 0 \cdot \infty = 0$.)
\end{thm}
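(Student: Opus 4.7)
The plan is to verify $\RE$ has the three properties, then establish uniqueness.

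For the forward direction, functoriality is the chain rule for relative entropy: given composable morphisms $(X,q)\xrightarrow{(f,s)}(Y,r)\xrightarrow{(g,t)}(Z,u)$, the composite's prior is $P_x = s_{x,f(x)}\,t_{f(x),gf(x)}\,u_{gf(x)} = s_{x,f(x)}\,p'_{f(x)}$ where $p'$ is the prior of $(g,t)$. Expanding $\ln(q_x/P_x) = \ln(q_x/p_x) + \ln(p_x/P_x)$, summing, and using $\sum_{x:f(x)=y} q_x = r_y$ yields $S(q,P) = S(q,p) + S(r, p')$. Vanishing on $\FP$ is immediate since optimality $s\circ r = q$ forces $p = q$. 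Convex linearity and lower semicontinuity follow from standard analytic properties of $x\log(x/y)$.

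For uniqueness, let $F$ be any functor with these properties. The key reduction uses the terminal morphism $(!_Y,r):(Y,r)\to(\mathbf{1},1)$, whose hypothesis $r$ is trivially optimal, so it lies in $\FP$ and $F(!_Y,r)=0$. A direct calculation shows the composite $(!_Y,r)\circ(f,s)$ equals $(!_X,p):(X,q)\to(\mathbf{1},1)$, whose stochastic section is precisely the prior $p$. By functoriality, $F(f,s) = F(!_X,p)$. So the problem reduces to pinning down $G(q,p):=F(!_X,p)$, viewed as a function of two probability distributions on the same finite set. From the hypotheses on $F$ this $G$ satisfies $G(q,q)=0$, is lower semicontinuous, and obeys a convex linearity identity of the form $G(\lambda q_1 \oplus (1-\lambda)q_2,\,\lambda p_1 \oplus (1-\lambda)p_2) = \lambda G(q_1,p_1)+(1-\lambda)G(q_2,p_2)$, coming from forming convex combinations of such morphisms over disjoint-union supports (then collapsing the doubled target back to $\mathbf{1}$).

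Iterating convex linearity should reduce $G(q,p)$ to a sum $\sum_x g(q_x,q_x)$... more precisely $\sum_x g(q_x,p_x)$ for some function $g:[0,1]^2\to[0,\infty]$ with $g(t,t)=0$, while functoriality applied to measurements that split a single atom into several imposes additional chain-rule constraints on $g$. The resulting functional equation is a relative-entropy analogue of the fundamental equation of information theory, and its only lower semicontinuous solutions are $g(x,y)=c\,x\log(x/y)$ with $c\in[0,\infty]$; then $F = c\,\RE$ on all of $\FinStat$. The main obstacle will be deriving and solving this functional equation: as the introduction notes, Petz's original approach contained a gap that must be filled here without invoking any prior characterization of entropy. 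Lower semicontinuity is indispensable at the boundary $p_x\to 0$ with $q_x > 0$, and for ruling out the pathological (non-measurable) additive solutions permitted by the algebraic identities alone.
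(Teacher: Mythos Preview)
Your forward direction and the reduction to terminal morphisms $(!_X,p)\colon(X,q)\to(\mathbf 1,1)$ are correct and match the paper exactly. The gap is in everything after that.

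The claim that iterating convex linearity reduces $G(q,p)$ to $\sum_x g(q_x,p_x)$ does not work. Convex linearity in $\FinStat$ gives
\[
G\bigl(\lambda q_1\oplus(1-\lambda)q_2,\ \lambda p_1\oplus(1-\lambda)p_2\bigr)=\lambda\,G(q_1,p_1)+(1-\lambda)\,G(q_2,p_2),
\]
but the weight $\lambda$ is shared by both arguments. To split $(X,q,p)$ into singletons you would need weights $\lambda_i$ with $\lambda_i\cdot 1=q_i$ and $\lambda_i\cdot 1=p_i$ simultaneously, forcing $p=q$. So convex linearity alone cannot produce the pointwise decomposition you want; the paper never obtains such a formula. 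Instead, it extracts a one-variable function $g(\alpha)=F$ on a specific morphism $(\mathbf 2,q(1))\to(\mathbf 1,1)$ with section $q(\alpha)$, and builds a particular commutative square in $\FinStat$ (mixing functoriality, convex linearity, and vanishing on $\FP$) to derive Cauchy's equation $g(\alpha\beta)=g(\alpha)+g(\beta)$, whence $g(\alpha)=-c\ln\alpha$. The general case is then reduced to this via another nontrivial square involving $(X+X,p\oplus 0)$ (Lemma~4.1 in the paper), not by iterated convex splitting.

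Moreover, the phrase ``its only lower semicontinuous solutions are $g(x,y)=c\,x\log(x/y)$'' hides the hardest part of the proof. Once $c$ is fixed, the cases $0<c<\infty$ and $c=0$ are manageable, but the case $c=\infty$ (showing $F$ is infinite on every non-$\FP$ morphism) requires a delicate six-lemma argument to produce a \emph{uniform} bound so that lower semicontinuity can be applied in the contrapositive direction. Your sketch does not indicate awareness of this difficulty, and a proof that treats the functional-equation step as routine will not close.
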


In the rest of this section we begin by describing $[0,\infty]$ as a category and checking that $\RE$ is a functor.  Then we describe what it means for the functor $\RE$ to be lower semicontinuous and convex linear, and check these properties.  We postpone the hard part of the proof, in which we characterize $\RE$ up to a scalar multiple by these properties, to Section \ref{sec:proof}.

In what follows, it will be useful to have an explicit formula for $S(q,s \circ r)$.  By definition,
\[           S(q, s \circ r) = \sum_{x \in X} q_x \ln\left(\frac{q_x}{(s \circ r)_x} \right) \]
We have
\[           (s \circ r)_x = \sum_{y \in Y} s_{xy} r_y ,\]
but by Lemma \ref{lem:finstat}, $s_{xy} = 0$ unless $f(x) = y$, so the sum has just one term:
\[           (s \circ r)_x = s_{x\, f(x)} r_{f(x)} \]
and we obtain
\begin{equation}
\label{eq:relative_entropy}
           S(q, s \circ r) = \sum_{x \in X} q_x \ln\left(\frac{q_x}{s_{x \, f(x)} r_{f(x)}} 
\right) .
\end{equation}

\subsection{Functoriality}

We make $[0,\infty]$ into a monoid using addition, where we define 
addition in the usual way for numbers in $[0,\infty)$ and set
\[             \infty + a = a + \infty = \infty \]
for all $a \in [0,\infty]$.   There is thus a category with one object and elements of $[0,\infty]$ as endomorphisms of this object, with composition of morphisms given by addition.  With a slight abuse of language we also use $[0,\infty]$ to denote this category.

\begin{lem}
\label{lem:functor}
The map $\RE \maps \FinStat \to [0,\infty]$ described in Theorem \ref{thm1} is a functor. 
\end{lem}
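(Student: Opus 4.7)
The plan is to verify both axioms: preservation of identities and preservation of composition. For the identity morphism $(1_X, 1_X) \maps (X,q) \to (X,q)$, the associated prior is $s \circ r = 1_X \circ q = q$, so $\RE(1_X, 1_X) = S(q,q) = 0$, which is indeed the identity of the one-object monoid $[0,\infty]$. All the real work lies in showing that for composable morphisms $(f,s)\maps(X,q)\to(Y,r)$ and $(g,t)\maps(Y,r)\to(Z,u)$, we have
\[
\RE\bigl((g,t)\circ(f,s)\bigr) \;=\; \RE(f,s) + \RE(g,t).
\]

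To carry this out, I would start from the explicit formula \eqref{eq:relative_entropy}. The composite morphism in $\FinStat$ is $(g\circ f,\; s\circ t)\maps (X,q)\to(Z,u)$, so I first need to compute the prior $(s\circ t)\circ u$ at a point $x\in X$. Using Lemma \ref{lem:finstat} (twice), the stochastic convolution $(s\circ t)_{xz} = \sum_y s_{xy}t_{yz}$ collapses, since $s_{xy}=0$ unless $y=f(x)$, to $s_{x\,f(x)}\,t_{f(x)\,z}$; then $(s\circ t\circ u)_x$ becomes $s_{x\,f(x)}\, t_{f(x)\,g(f(x))}\, u_{g(f(x))}$. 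The key algebraic step is to split the logarithm:
\[
\ln\frac{q_x}{s_{x\,f(x)}\,t_{f(x)\,g(f(x))}\,u_{g(f(x))}} = \ln\frac{q_x}{s_{x\,f(x)}\,r_{f(x)}} + \ln\frac{r_{f(x)}}{t_{f(x)\,g(f(x))}\,u_{g(f(x))}}.
\]
Summing the first term weighted by $q_x$ reproduces $S(q,s\circ r) = \RE(f,s)$. For the second term, I would regroup the sum over $x$ by the value $y=f(x)\in Y$, using the measure-preservation identity $r_y = \sum_{x:\,f(x)=y} q_x$ to obtain $\sum_y r_y \ln\bigl(r_y/(t_{y\,g(y)}u_{g(y)})\bigr) = S(r,t\circ u) = \RE(g,t)$.

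The main subtlety, and the only place where care is needed, is the presence of $\infty$ and the conventions $0\cdot\infty=0$ and $0\ln 0=0$ coming from $q_x\ln(q_x/p_x)$ defined to be $0$ when $q_x=0$ and $\infty$ when $q_x>0$ but $p_x=0$. I would handle this by splitting $X$ into the support $\{x:q_x>0\}$ and its complement: terms with $q_x=0$ contribute $0$ to all three sums and may be discarded. On the support, the inserted factor $r_{f(x)}$ in the numerator and denominator is strictly positive (because $r_{f(x)} \geq q_x>0$ by measure-preservation), so the logarithm splits cleanly without encountering $\ln 0$ or $\infty - \infty$. If for some $x$ in the support we have $s_{x\,f(x)}=0$, both $\RE(f,s)$ and the composite relative entropy are $+\infty$, and the equation holds in $[0,\infty]$; similarly, if $t_{f(x)\,g(f(x))}=0$ for some $x$ in the support, then $r_{f(x)}>0$ and that same value $y=f(x)$ contributes $+\infty$ to $\RE(g,t)$ and to the composite. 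In every degenerate case, both sides equal $+\infty$, so the functoriality equation continues to hold under the monoid structure on $[0,\infty]$.
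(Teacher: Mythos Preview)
Your proposal is correct and follows essentially the same approach as the paper: compute $(s\circ t\circ u)_x$ via Lemma~\ref{lem:finstat}, split the logarithm by inserting the factor $r_{f(x)}$, and regroup the second sum over fibers of $f$ using measure-preservation. Your treatment of the degenerate cases is in fact more thorough than the paper's (and you explicitly verify preservation of identities, which the paper omits as trivial).
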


\begin{proof} 
Let
\[
\xymatrix{ (X,q) \ar@/_/[rr]_f && \ar@{~>}@/_/[ll]_s (Y,r) \ar@/_/[rr]_g && (Z,u) \ar@{~>}@/_/[ll]_t }
\]
be a composable pair of morphisms in $\FinStat$. Then the functoriality of $\RE$ can
be shown by repeated use of Equation \eqref{eq:relative_entropy}:
\[ 
\begin{array}{ccl}
\RE\left(g\circ f,s\circ t\right) &=& S\left(q,s\circ t\circ u\right) \\ \\
&=& \displaystyle{ \sum_{x\in X} q_x\ln\left(\frac{q_x}{s_{x\, f(x)}t_{f(x)\,g(f(x))} u_{g(f(x))}} \right) } \\  \\
&\stackrel{(\ast)}{=}&  
\displaystyle{\sum_{x\in X} q_x\ln\left(\frac{q_x}{s_{x\, f(x)} r_{f(x)}}\right) + 
\sum_{x\in X} q_x \ln\left(\frac{r_{f(x)}}{t_{f(x)\,g(f(x))} u_{g(f(x))}}\right)}
\\  \\
&=& S(q, s \circ r) +  \displaystyle{ \sum_{y\in Y} r_y\ln\left(\frac{r_y}{t_{y\,g(y)} u_{g(y)}}\right) }
\\   \\
&=& S(q, s \circ r) +  S(r,t\circ u)  \\  \\
&=& \RE(f,s) + \RE(g,t) .
\end{array}
\]
Here the main step is $(\ast)$, where we have simply inserted 
\[ 0=  \sum_x q_x\ln\frac{1}{r_{f(x)}} + \sum_x q_x\ln r_{f(x)}.  \]
This is unproblematic as long as $r_{f(x)}>0$ for all $x$. When there are $x$ with $r_{f(x)}=0$, then we necessarily have $q_x=0$ as well, and both $q_x\ln\tfrac{1}{r_{f(x)}}$ and $q_x\ln r_{f(x)}$ actually vanish, so this case is also fine.  In the step after $(\ast)$, we use the fact that for each $y \in Y$, $r_y$ is the sum
of $q_x$ over all $x$ with $f(x) = y$.  \end{proof}

\subsection{Lower semicontinuity}
\label{sec:semicont}

Next we explain what it means for a functor
to be lower semicontinuous, and prove that $\RE$ has this property.   There is a way to think about semicontinuous functors in terms of topological categories, but this is not
really necessary for our work, so we postpone it to Appendix \ref{app:semicont}.  Here we take a more simple-minded approach.

If we fix two finite sets $X$ and $Y$, the set of all morphisms 
\[    (f,s) \maps (X,q) \to (Y,p) \]
in $\FinStat$ forms a topological space in a natural way.  To see this, let
\[   P(X) = \{ q \maps X \to [0,1] : \; \sum_{x \in X} q_x = 1 \} \]
be the set of probability distributions on a finite set $X$.  This is a subset of 
a finite-dimensional real vector space, so we give it the subspace
topology.  With this topology, $P(X)$ is homeomorphic to a simplex.  
The set of stochastic maps $s \maps Y \leadsto X$ is also a subspace of
a finite-dimensional real vector space, namely the space of matrices
$\mathbb{R}^{X \times Y}$, so we also give it the subspace topology.  We then give 
$P(X) \times P(Y) \times \mathbb{R}^{X \times Y}$ the product topology.  The set of morphisms $(f,s) \maps (X,q) \to (Y,p)$ in $\FinStat$ can be seen as a subspace of this, and we give it the subspace topology.  We then say:

\begin{defn} A functor $F \maps \FinStat \to [0,\infty]$ is \define{lower semicontinuous} 
if for any sequence of morphisms $(f, s^i) \maps (X,q^i) \to (Y,r^i)$ that converges
to a morphism $(f,s) \maps (X,q) \to (Y,r)$, we have
\[       F(f,s) \le \liminf_{i \to \infty} F(f,s^i)  .\] 
\end{defn}

\noindent
We could use nets instead of sequences here, but it would make no difference.  
We can then check another part of our main theorem:

\begin{lem} 
\label{lem:semicont}
The functor $\RE \maps \FinStat \to [0,\infty]$ described in Theorem \ref{thm1} is lower semicontinuous.
\end{lem}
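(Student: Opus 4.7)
The plan is to reduce the claim to the joint lower semicontinuity of the classical relative entropy $S \maps P(X) \times P(X) \to [0,\infty]$. Fix a convergent sequence $(f, s^i) \maps (X,q^i) \to (Y,r^i)$ with limit $(f,s) \maps (X,q) \to (Y,r)$; note that the discrete piece $f$ is the same in every term (so $r^i$ is determined by $q^i$), and convergence of the underlying tuples means $q^i \to q$ in $P(X)$ and $s^i \to s$ as $X\times Y$ matrices. Matrix-vector multiplication is continuous, so $s^i \circ r^i \to s \circ r$ in $P(X)$. By Equation \eqref{eq:relative_entropy}, $\RE(f, s^i) = S(q^i, s^i \circ r^i)$ and similarly for the limit, so it suffices to prove that $S$ is jointly lower semicontinuous on $P(X) \times P(X)$.

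For that I would argue termwise. Write $S(q,p) = \sum_{x \in X} \ell(q_x, p_x)$ where $\ell(t,s) := t \ln(t/s)$ with the conventions $\ell(0,s) = 0$ and $\ell(t,0) = \infty$ for $t > 0$. The function $\ell$ is continuous on $[0,1] \times (0,1]$ (including at $(0,s)$ with $s > 0$). At $(t,0)$ with $t > 0$ one has $\ell(t,0) = \infty$ and any approach gives $\ell(t_n,s_n) \to \infty$, so lower semicontinuity is automatic. At the remaining point $(0,0)$ the inequality $\ln u \le u-1$ applied to $u = s/t$ gives $\ell(t,s) \ge -(s-t)$ whenever $0 < t \le s$; combined with $\ell \ge 0$ when $t \ge s$, this yields $\liminf_{n} \ell(t_n,s_n) \ge 0 = \ell(0,0)$ for any sequence $(t_n,s_n) \to (0,0)$. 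Finally, $\ell(t,s) = t \ln t - t \ln s \ge t \ln t \ge -1/e$ on $[0,1]^2$, so each term is uniformly bounded below. A finite sum of lower semicontinuous, uniformly bounded-below functions is lower semicontinuous, so $S$ is jointly lower semicontinuous, and composing with the continuous map $(s,r) \mapsto s \circ r$ gives the claim for $\RE$.

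The only genuinely delicate point is the behavior at $(t,s) = (0,0)$: the individual summand $\ell$ can take negative values, so one cannot merely invoke pointwise continuity and nonnegativity of each term. The elementary bound $\ell(t,s) \ge -(s-t)$ on $\{t \le s\}$ is the key estimate that controls these potentially negative contributions and makes the termwise argument go through. Everything else in the proof is routine: continuity of matrix multiplication, the uniform lower bound $-1/e$, and the fact that a finite sum of lower semicontinuous bounded-below functions is lower semicontinuous.
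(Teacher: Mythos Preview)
Your proof is correct and follows essentially the same approach as the paper's: a termwise case analysis according to whether the second argument of each summand vanishes. The paper carries out this analysis more informally at the level of the full sum $S(q,s\circ r)$, whereas your isolation of the single-term function $\ell(t,s)$ and explicit treatment of the $(0,0)$ case via the bound $\ell(t,s)\ge -(s-t)$ makes precise a point the paper only sketches.
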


\begin{proof}
Suppose that $(f, s^i) \maps (X,q^i) \to (Y,r^i)$ is a sequence of morphisms in $\FinStat$ that converges to $(f,s) \maps (X,q) \to (Y,r)$.   We need to show that
\[       S(q,s\circ r) \le \liminf_{i \to \infty} S(q^i,s^i\circ r^i)  .\] 
If there is no $x\in X$ with $s_{x\,f(x)} r_{f(x)}=0$ then this is clear, since all the elementary functions involved in the definition of relative entropy are continuous 
away from $0$. If all $x\in X$ with $s_{x\, f(x)}=0$ also satisfy $q_x=0$, then $S(q,s\circ r)$ is still finite since none of these $x$ contribute to the sum for $S$.  In this case $S(q^i, s^i \circ r^i)$ may remain arbitrarily large, even infinite as $i \to \infty$.  But the inequality
\[       S(q,s\circ r) \le \liminf_{i \to \infty} S(q^i,s^i\circ r^i)  \] 
remains true.  The same argument applies if there are $x\in X$ with $r_{f(x)}=0$, which implies $q_x=0$. Finally, if there are $x\in X$ with $s_{x\, f(x)}=0$ but $r_{f(x)} \geq q_x > 0$, then $S(q,s\circ r)=\infty$.   The above inequality is still valid in this
case.
\end{proof}

That lower semicontinuity of relative entropy is an important property was already known to Petz; see the closing remark in~\cite{Petz}.  

\subsection{Convex linearity}
\label{sec:convex_linearity}

Next we explain what it means to say that relative entropy gives 
a convex linear functor from $\FinProb$ to $[0,\infty]$, and we prove this is true.
In general, convex linear functors go between convex categories.  These are topological categories equipped with an action of the operad $\P$ discussed by Leinster \cite{Leinster}.  Since we do not need the general theory here, we postpone it
to Appendix~\ref{app:convalgs}. 

First, note that there is a way to take convex linear combinations of objects and
morphisms in $\FinProb$. Let $(X, p)$ and $(Y, q)$ be finite sets equipped
with probability measures, and let $\lambda \in [0, 1]$. Then there is a
probability measure
\[
\lambda p \oplus (1 - \lambda) q
\]
on the disjoint union $X + Y$, whose value at a point $x$ is
given by
\[
(\lambda p \oplus (1 - \lambda) q)_x
=
\begin{cases}
\lambda p_x        &\text{if } x \in X\\
(1 - \lambda) q_x &\text{if } x \in Y.
\end{cases}
\]
Given a pair of morphisms 
\[ f \maps (X,p) \to (X',p'), \qquad g \maps (Y,q) \to (Y',q') \]
in $\FinProb$, there is a unique morphism
\[
\lambda f \oplus (1 - \lambda) g \maps
(X + Y,\lambda p \oplus (1 - \lambda) q) \;\; \to \;\;
(X' + Y', \lambda p' \oplus (1 - \lambda) q')
\]
that restricts to $f$ on $X$ and to $g$ on $Y$.

A similar construction applies to $\FinStat$.  Given a pair of morphisms
\[
\xymatrix{ (X,p)\ar@/_/[rr]_f && (X',p') \ar@{~>}@/_/[ll]_s } \qquad \xymatrix{ (Y,q)\ar@/_/[rr]_g && (Y',q') \ar@{~>}@/_/[ll]_t } 
\]
in $\FinStat$, we define their convex linear combination to be
\[
\xymatrix{ (X + Y,\lambda p \oplus (1-\lambda)q) \ar@/_1.5pc/[rr]_{\lambda f\oplus (1-\lambda)g} && (X' + Y',\lambda p' \oplus (1-\lambda)q') \ar@{~>}@/_1.5pc/[ll]_{s\oplus t} }
\]
where $s\oplus t \maps X' + Y'\leadsto X + Y$ is the stochastic map which restricts to $s$ on $X'$ and $t$ on $Y'$. As a stochastic matrix, it is of block-diagonal form. It is right inverse to $\lambda f\oplus (1-\lambda)g$ by construction.

We may also define convex linear combinations of objects and morphisms
in the category $[0,\infty]$.   Since this category has only one object, there is only
one way to define convex linear combinations of objects.  Morphisms in this category are elements of the set $[0,\infty]$.  We have already made this set into a monoid using addition.  We can also introduce multiplication, defined in the usual way for numbers in $[0,\infty)$, and with 
\[            0  a = a 0 = 0  \]
for all $a \in [0,\infty]$.   This gives meaning to the convex linear combination
$\lambda a + (1 - \lambda) b$ of two morphisms $a,b$ in $[0,\infty]$.
For more details, see Appendices \ref{app:semicont} and \ref{app:convalgs}.
 
\begin{defn}
A functor $F \maps \FinStat \to [0,\infty]$ is \define{convex linear} if it preserves convex combinations of objects and morphisms.
\end{defn}

For objects this requirement is trivial, so all this really means is that for any pair of morphisms $(f,s)$ and $(g,t)$ in $\FinStat$ and any $\lambda \in [0,1]$, we have
\[
F\left(\lambda (f,s)\oplus (1-\lambda) (g,t)\right) \; = \; \lambda F(f,s) + (1-\lambda) F(g,t) .
\]

\begin{lem}
The functor $\RE \maps \FinStat \to [0,\infty]$ described in Theorem \ref{thm1} is
convex linear.
\end{lem}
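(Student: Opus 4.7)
The plan is to reduce the claim to a direct computation using the explicit formula for relative entropy given in Equation \eqref{eq:relative_entropy}, with the key observation being that the scalar $\lambda$ cancels out inside the logarithm.

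First I would unpack the data. Write the convex combination as $(h,u) \maps (X+Y, \lambda p \oplus (1-\lambda)q) \to (X'+Y', \lambda p' \oplus (1-\lambda) q')$, where $h = \lambda f \oplus (1-\lambda) g$ acts as $f$ on $X$ and as $g$ on $Y$, and where the stochastic map $u = s \oplus t$ is block-diagonal with $u_{x\, x'} = s_{x\, x'}$ for $x \in X, x' \in X'$ and $u_{y\, y'} = t_{y\, y'}$ for $y \in Y, y' \in Y'$ (and zero across blocks). In particular, for $x \in X$ we have $u_{x\, h(x)} = s_{x\, f(x)}$, and for $y \in Y$ we have $u_{y\, h(y)} = t_{y\, g(y)}$.

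Next I would apply Equation \eqref{eq:relative_entropy} and split the sum over $X+Y$ into its two parts. Over $X$, the summand at $x$ is
\[
\lambda p_x \ln\!\left( \frac{\lambda p_x}{s_{x\, f(x)} \cdot \lambda p'_{f(x)}} \right),
\]
and the two factors of $\lambda$ inside the logarithm cancel, leaving $\lambda \cdot p_x \ln(p_x / (s_{x\, f(x)} p'_{f(x)}))$, whose sum over $x \in X$ is $\lambda\, \RE(f,s)$. The analogous computation for the $Y$ portion yields $(1-\lambda) \RE(g,t)$. Adding gives exactly $\lambda \RE(f,s) + (1-\lambda) \RE(g,t)$.

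The only subtle point, and the one genuine obstacle, is handling degenerate cases so that the cancellation argument is actually legitimate. If $\lambda \in \{0,1\}$, then all probabilities $\lambda p_x$ (or $(1-\lambda) q_y$) on one side vanish, and the right-hand side reads $0 \cdot \RE(\cdot) = 0$ by the convention $0 \cdot \infty = 0$ stated in Theorem \ref{thm1}; on the left-hand side those $x$ contribute nothing to the sum by the convention that $0 \ln(0/\cdot) = 0$, so both sides agree. For $\lambda \in (0,1)$, the factor $\lambda$ is strictly positive, and the same conventions handle the terms where $p_x = 0$ or where $s_{x\, f(x)} = 0$ exactly as in the proof of Lemma \ref{lem:functor}: if a denominator vanishes while the numerator does not, both sides equal $\infty$; if numerator and denominator both vanish, both sides contribute $0$. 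Once these cases are checked, the cancellation in the logarithm goes through term by term and the lemma follows.
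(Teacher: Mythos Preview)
Your proposal is correct and follows essentially the same route as the paper's own proof: both unpack the convex combination, apply Equation~\eqref{eq:relative_entropy}, split the sum over $X+Y$, and cancel the factor $\lambda$ (resp.\ $1-\lambda$) inside the logarithm. Your treatment of the degenerate cases $\lambda\in\{0,1\}$ and the $0\ln 0$ conventions is slightly more explicit than the paper's, but the argument is otherwise identical.
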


\begin{proof} This follows from a direct computation:\smallskip
\begin{align*}
\RE((\lambda (f,s) \oplus &(1-\lambda)(g,t)) = S(\lambda p\oplus (1-\lambda)q,\lambda s\circ p'\oplus (1-\lambda) t\circ q') \\[4pt]
&= \sum_{x\in X} \lambda p_x \ln\left( \frac{\lambda p_x}{s_{x\,f(x)}\cdot\lambda p'_{f(x)}} \right) 
 + \sum_{y\in Y} (1-\lambda) q_y \ln\left( \frac{(1-\lambda) q_y}{t_{y\,g(y)}\cdot(1-\lambda) q'_{g(y)}} \right) \\[4pt]
& = \lambda \sum_{x\in X} p_x\ln\left( \frac{p_x}{s_{x\,f(x)} p'_{f(x)}} \right) + (1-\lambda) \sum_{y\in Y} \ln\left( \frac{q_y}{t_{y\,g(y)}q'_y} \right) \\[4pt]
& = \lambda S(p,s\circ p')+(1-\lambda) S(q,t\circ q') \\
& = \lambda \,\RE(f,s) + (1-\lambda) \, \RE(g,t)   \qedhere
\end{align*}
\end{proof}

\section{Proof of the theorem}
\label{sec:proof}

Now we prove the main part of Theorem \ref{thm1}.
 
\begin{lem}
\label{lem:converse}
Suppose that a functor
\[
F\maps \FinStat  \to [0,\infty]  
\]
is lower semicontinuous, convex linear, and vanishes on morphisms in the subcategory $\FP$.  Then for some $0 \le c \le \infty$ we have $F(f,s) = c\,\RE(f,s)$ for all morphisms $(f,s)$ in $\FinStat$.  
\end{lem}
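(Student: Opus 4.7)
The plan is to show that $F$ factors through the pair $(q, p) := (q, s\circ r)$ attached to each morphism, and then to identify the resulting two-variable functional with a constant multiple of $S$. For the reduction step, observe that for any finite probability space $(Y, r)$ the morphism $(\pi_Y, r) \maps (Y, r) \to (\mathbf{1}, 1)$---with $\pi_Y$ the unique function and hypothesis $r$ regarded as a stochastic map $\mathbf{1} \leadsto Y$---lies in $\FP$, since the optimality equation $r \circ 1 = r$ is automatic. Thus $F(\pi_Y, r) = 0$, and functoriality applied to the composite
\[
(X, q) \xrightarrow{(f,s)} (Y, r) \xrightarrow{(\pi_Y, r)} (\mathbf{1}, 1)
\]
yields $F(f,s) = F(\pi_X, s\circ r)$. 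Defining $\Phi(q, p) := F(\pi_X, p)$ for $p \in P(X)$, we have $F(f, s) = \Phi(q, s\circ r)$, and the problem reduces to characterizing $\Phi$ on pairs of probability distributions on the same finite set.

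The functional $\Phi$ inherits three key properties. First, $\Phi(q, q) = 0$, since $(\pi_X, q) \in \FP$. Second, $\Phi$ is convex linear and lower semicontinuous in $(q, p)$, inherited from the corresponding properties of $F$. Third, $\Phi$ obeys a chain rule: whenever $f \maps X \to Y$ satisfies $f_* q = f_* p = r$ and $p' \in P(Y)$, then setting $p''_x := p_x \cdot p'_{f(x)} / r_{f(x)}$ (with the convention $0/0 := 0$),
\[
\Phi(q, p'') \;=\; \Phi(q, p) + \Phi(r, p').
\]
This is derived by choosing a morphism $(f, s) \maps (X,q) \to (Y, r)$ with $s\circ r = p$, composing with $(\pi_Y, p') \maps (Y, r) \to (\mathbf{1}, 1)$, and applying functoriality together with the reduction of the preceding step. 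Note that this chain rule is exactly the classical recursive identity satisfied by $S$.

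To finish I would first pin down $\Phi$ on the Bernoulli case $\phi(a, b) := \Phi((a, 1-a), (b, 1-b))$. Applying the chain rule to a four-element set arising as a convex combination of two Bernoulli pairs, and using convex linearity to decompose, yields the identity
\[
\Phi\bigl(\lambda q_1 \oplus (1-\lambda) q_2,\; \mu p_1 \oplus (1-\mu) p_2\bigr) \;=\; \lambda \phi(\alpha_1, \beta_1) + (1-\lambda)\phi(\alpha_2, \beta_2) + \phi(\lambda, \mu),
\]
and equating this across different 2-block partitions of the four-element set produces a functional equation for $\phi$ whose only lower semicontinuous solutions vanishing on the diagonal are constant multiples of $a \log(a/b) + (1-a)\log((1-a)/(1-b))$. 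Once $\phi$ is known, general $\Phi(q, p)$ is recovered by iterating the chain rule along a sequence of 2-block refinements of $X$, mirroring the analogous telescoping of $S(q, p)$; the boundary case $p_x = 0 < q_x$ is handled by approximating $p$ with strictly positive priors and invoking lower semicontinuity, which forces $\Phi(q, p) = \infty$. The main obstacle I foresee is this last identification step: the chain rule is rigid, relating $\Phi(q,p)$ to $\Phi(q,p'')$ only for a very constrained reweighting and only when marginals coincide under some $f$, so extracting enough independent relations to solve the Bernoulli equation---and then cleanly gluing the Bernoulli formula back into a formula on general $X$ without being derailed by infinities at the support boundary---is where most of the work lies, and where lower semicontinuity is indispensable.
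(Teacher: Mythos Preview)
Your reduction to the pair $(q, s\circ r)$ via the terminal object $(\mathbf{1},1)$ matches the paper's opening move exactly, and your chain rule is a correct consequence of functoriality. But the rest of the plan has two genuine gaps.

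First, your treatment of the boundary case $p_x = 0 < q_x$ is backwards. You write that ``approximating $p$ with strictly positive priors and invoking lower semicontinuity \ldots forces $\Phi(q,p) = \infty$''. Lower semicontinuity gives $\Phi(q,p) \le \liminf \Phi(q,p^i)$, an \emph{upper} bound; it can never force a value to be infinite. The paper obtains $\Phi(q,p) = \infty$ when $\supp(p) \not\supseteq \supp(q)$ by an entirely different device (its Lemma~4.4): one builds a commutative triangle in which one leg depends on a free parameter $\alpha$ while the others do not, so finiteness is impossible. You will need some such rigidity argument; lower semicontinuity alone cannot do this job.

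Second, you do not address the case $c = \infty$ at all, and this is where the bulk of the work lies. Once one has $g(\alpha) = -c\ln\alpha$ on the one-parameter family (the paper extracts this Cauchy equation from a single square, rather than your two-variable $\phi$), the cases $0 < c < \infty$ and $c = 0$ are relatively short, but $c = \infty$ requires a long contrapositive argument: assuming $\Phi(q,p) < \infty$ for some $q \ne p$, one must produce a \emph{uniform} bound on a family of values of $\Phi$ and then invoke lower semicontinuity to pass to a limit---this is the one place lower semicontinuity is actually used, and in the correct direction. Your sketch gives no indication of how to handle this, and the paper devotes six lemmas to it. Your instinct that ``lower semicontinuity is indispensable'' is right, but you have located its role in the wrong place.
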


\begin{proof} 
Let $F \maps \FinStat\to[0,\infty]$ be any functor satisfying these hypotheses.
By functoriality and the fact that $0$ is the only morphism in $[0,\infty]$ with an inverse, $F$ vanishes on isomorphisms.  Thus, given any commutative square in $\FinStat$ where the vertical morphisms are isomorphisms:
\[
\xymatrix{ (X,p) \ar@/_/[rr]_f \ar@/_/[dd]_{\rotatebox{90}{$\sim$}} 
&& (Y,q) \ar@/_/[dd]_{\rotatebox{90}{$\sim$}} \ar@{~>}@/_/[ll]_s \\\\
(X',p') \ar@{~>}@/_/[uu]_{\rotatebox{90}{$\sim$}} \ar@/_/[rr]_{f'} 
&& (Y',q') \ar@{~>}@/_/[ll]_{s'}  \ar@{~>}@/_/[uu]_{\rotatebox{90}{$\sim$}}
}
\]
functoriality implies that $F$ takes the same value on the top and bottom morphisms:
\[  F(f,s) = F(f',s'). \]
So, in what follows, we can replace an object by an isomorphic object without changing the value of $F$ on morphisms from or to this object.

Given any morphism in $\FinStat$, complete it to a diagram of this form:
\[
\xymatrix{ (X,p) \ar@/_/[rr]_f \ar@/_/[rdd]_{!_X} && (Y,q) \ar@/_/[ldd]_{!_Y} \ar@{~>}@/_/[ll]_s \\\\
& (\mathbf{1},1) \ar@{~>}@/_/[uul]_{s\circ q} \ar@{~>}@/_/[uur]_{q} }
\]
Here $\mathbf{1}$ denotes any one-element set equipped with the unique probability measure $1$, and $!_X \maps X\to\mathbf{1}$ is the unique function, which is automatically measure-preserving since $p$ is assumed to be normalized. Since this diagram commutes, and the morphism on the lower right lies in $\FP$, we obtain
\[
F \,\bigg( \xymatrix{ (X,p)\ar@/_/[rr]_f && (Y,q) \ar@{~>}@/_/[ll]_{s} } \bigg) = F\, \bigg( \xymatrix{ (X,p) \ar@/_/[rr]_{!_X} && (\mathbf{1},1) \ar@{~>}@/_/[ll]_{s\circ q} } \bigg) .
\]
In other words: the value of $F$ on a morphism depends only on the two distributions $p$ and $s\circ q$ living on the domain of the morphism. For this reason, it is enough to prove the claim only for those morphisms whose codomain is $(\mathbf{1},1)$.

We now consider the family of distributions
\[
q(\alpha) = \left(\alpha,1-\alpha\right),
\]
on a two-element set $\mathbf{2}=\{0,1\}$, and consider the function
\beq
\label{g}
g(\alpha) = F  \bigg( \xymatrix{ (\mathbf{2},q(1)) \ar@/_/[rr]_{!_\mathbf{2}} && (\mathbf{1},1) \ar@{~>}@/_/[ll]_{q(\alpha)} } \bigg)
\eeq
for $\alpha\in[0,1]$. Note that for all $\beta\in[0,1)$, this square in $\FinStat$ commutes:
\[
\xymatrix@!=1cm{ (\mathbf{3},(1,0,0)) \ar@/_.4cm/[dd]_*+\txt{$\scriptstyle{0\mapsto 0}$\\$\scriptstyle{1,2\mapsto 1}$} \ar@/_.4cm/[rr]_*+\txt{$\scriptstyle{0,1\mapsto 0}$\\$\scriptstyle{2\mapsto 1}$} && (\mathbf{2},(1,0)) \ar@/_.4cm/[dd]_{!_\mathbf{2}} \ar@{~>}@/_.4cm/[ll]_{q(\beta)\oplus 1} \\\\
 (\mathbf{2},(1,0)) \ar@{~>}@/_.4cm/[uu]_{1\oplus q\left(\frac{\alpha (1 - \beta)}{1 - \alpha\beta} \right)} \ar@/_.4cm/[rr]_{!_\mathbf{2}} && (\mathbf{1},1) \ar@{~>}@/_.4cm/[uu]_{q(\alpha)} \ar@{~>}@/_.4cm/[ll]_{q(\alpha\beta)} }
\]
where the left vertical morphism is in $\FP$, while the top horizontal
morphism is the convex linear combination
\[   1 \left(\xymatrix{ (\mathbf{2},q(1)) \ar@/_/[rr]_{!_\mathbf{2}} && (\mathbf{1},1) \ar@{~>}@/_/[ll]_{q(\beta)} }\right) \oplus 0 \left( 1_{(\mathbf{1},1)} \right) .\]
Applying the functoriality and convex linearity of $F$ to this square, we thus obtain the equation
\begin{equation}
\label{eq:cauchy}
g(\alpha\beta) = g(\alpha) + g(\beta) .
\end{equation}
We claim that all solutions of this equation are of the form $g(\alpha)=-c \ln \alpha$ for some $c\in [0,\infty]$.   First we show this for $\alpha \in (0,1]$.  

If $g(\alpha) < \infty$ for all $\alpha \in (0,1]$, this equation is Cauchy's functional equation in its multiplicative-to-additive form, and it is known \cite{Kuczma} that any solution with $g$ measurable is of the desired form for some $c < \infty$.   By our hypotheses on $F$, $g$ is lower semicontinuous, hence measurable. Thus, for some $c < \infty$ we have $g(\alpha)= -c \ln\alpha$ for all $\alpha \in (0,1].$

If $g(\alpha) = \infty$ for some $\alpha\in (0,1]$, then Equation \eqref{eq:cauchy} implies that $g(\beta)=\infty$ for all $\beta<\alpha$.  Since it also implies that $g(2\beta)=\tfrac{1}{2}g(\beta)$, we conclude that then $g(\beta)=\infty$ for all $\beta \in (0,1)$.  Thus, if we take $c = \infty$ we
again have $g(\alpha)= -c \ln\alpha$ for all $\alpha \in (0,1].$

Next consider $\alpha = 0$.  If $c>0$, then $g(0)=g(0)+g(\tfrac{1}{2})$ shows that we necessarily have $g(0)=\infty$. If $c=0$, then lower semicontinuity implies $g(0)=0$. In both cases, the equation $g(\alpha)=-c \ln\alpha$ also holds for $\alpha=0$. 

In what follows, choosing the value of $c$ that makes $g(\alpha) = -  c\ln \alpha$, we shall prove that the equation
\[
F\left( \xymatrix{ (X,p) \ar@/_/[rr]_{!_X} && (\mathbf{1},1) \ar@{~>}@/_/[ll]_r } \right) = c\, S(p,r)
\]
holds for any two probability distributions $p$ and $r$ on any finite set $X$.  
Using Equation \eqref{eq:relative_entropy}, it suffices to show that
\begin{equation}
\label{showF}
F\left( \xymatrix{ (X,p) \ar@/_/[rr]_{!_X} && (\mathbf{1},1) \ar@{~>}@/_/[ll]_r } \right) = c \sum_{x \in X} p_x \ln \left(\frac{p_x}{r_x} \right) .
\end{equation}
We prove this for more and more general cases in the following
series of lemmas.  We start with the generic case, where $c < \infty$ and the probability distribution $r$ has full support.   In Lemma \ref{lem:4.3} we treat all cases with $0 < c < \infty$.   In Lemma \ref{lem:4.4} we treat the case 
$c = 0$, and in Lemma \ref{lem:claim6} we treat the case $c = \infty$,  which seems much harder than the rest.
\end{proof}

\begin{lem}
\label{lem:4.1}
Equation~\eqref{showF} holds if $c < \infty$ and the support of $r$ is all of $X$.
\end{lem}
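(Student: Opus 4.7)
The plan is to prove the formula by induction on $|X|$. The case $|X|=1$ is immediate, since both sides vanish. The case $|X|=2$ with an arbitrary source distribution is the main obstacle and is handled separately below; for $|X|\ge 3$ the inductive step factors the morphism through $\mathbf{2}$.

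For the inductive step, fix any $x_0\in X$, let $X'=X\setminus\{x_0\}$, and set $p'=p|_{X'}/(1-p_{x_0})$ and $r'=r|_{X'}/(1-r_{x_0})$. Let $\pi\maps X\to\mathbf{2}$ collapse $X'$ to a point, with pushforwards $p_\pi=(p_{x_0},1-p_{x_0})$ and $r_\pi=(r_{x_0},1-r_{x_0})$, and define the stochastic section $s\maps\mathbf{2}\leadsto X$ by $s_{x_0,0}=1$ and $s_{x,1}=r'_x$ for $x\in X'$. A direct check shows $(!_X,r)=(!_{\mathbf{2}},r_\pi)\circ(\pi,s)$ in $\FinStat$, so functoriality gives
\[
F(!_X,r) \;=\; F(!_{\mathbf{2}},r_\pi) + F(\pi,s).
\]
Next, observe that $(\pi,s)$ is exactly the convex linear combination, with weights $p_{x_0}$ and $1-p_{x_0}$, of the identity on $(\{x_0\},1)$ and the morphism $(!_{X'},r')\maps(X',p')\to(\mathbf{1},1)$. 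Convex linearity, the vanishing of $F$ on identities, and the inductive hypothesis on $X'$ give $F(\pi,s)=c(1-p_{x_0})\,S(p',r')$. Combining this with the chain-rule identity
\[
S(p,r) \;=\; S(p_\pi,r_\pi) + (1-p_{x_0})\,S(p',r')
\]
reduces the claim to the two-element statement $F(!_{\mathbf{2}},r_\pi)=c\,S(p_\pi,r_\pi)$. The degenerate source distributions $p_{x_0}\in\{0,1\}$ are absorbed by this reduction: the trivial convex combination in each case collapses one of the two summands and lands directly on the function $g$.

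The hard part is therefore the $|X|=2$ base case, i.e.\ to show
\[
F\bigl((!_{\mathbf{2}},(\mu,1-\mu))\maps(\mathbf{2},(\lambda,1-\lambda))\to(\mathbf{1},1)\bigr) \;=\; c\bigl[\lambda\ln(\lambda/\mu)+(1-\lambda)\ln((1-\lambda)/(1-\mu))\bigr]
\]
for all $\lambda\in[0,1]$ and $\mu\in(0,1)$. The concentrated endpoints $\lambda\in\{0,1\}$ are known: the first is $g(\mu)=-c\ln\mu$ by definition of $g$, the second is $g(1-\mu)=-c\ln(1-\mu)$, obtained via the swap automorphism of $\mathbf{2}$ (which $F$ respects, since it vanishes on isomorphisms). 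For intermediate $\lambda\in(0,1)$, the strategy is to place the unknown morphism inside a carefully chosen commutative square over a three-element intermediate space --- in the same spirit as the square producing Equation~\eqref{eq:cauchy} --- in which one side is a morphism in $\FP$ (and so contributes zero to $F$), while the remaining sides decompose via convex linearity into identity morphisms, morphisms of $g$-type, and the unknown value $F(!_{\mathbf{2}},(\mu,1-\mu))$. Functoriality then converts commutativity of the square into an algebraic identity that determines $F(!_{\mathbf{2}},(\mu,1-\mu))$ in terms of values of $g$; a short simplification matches this to $c\,S(p_\pi,r_\pi)$, with lower semicontinuity handling consistency at any boundary. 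With this base case in place, the induction above completes the proof.
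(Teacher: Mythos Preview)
Your inductive reduction to $|X|=2$ is correct and clean, but the proof is incomplete at precisely the step that carries the real content: you never actually construct the commutative square for the $|X|=2$ case with $\lambda\in(0,1)$. Saying that one should ``place the unknown morphism inside a carefully chosen commutative square over a three-element intermediate space'' and that ``a short simplification matches this to $c\,S(p_\pi,r_\pi)$'' describes a strategy, not a proof. The entire difficulty of the lemma lies in producing such a diagram and checking that its pieces are genuinely convex combinations of $g$-type morphisms and morphisms in $\FP$; once that is written down, the rest is bookkeeping. As it stands a reader cannot verify your claim, and the analogy with the square producing~\eqref{eq:cauchy} is not immediate: that square has a \emph{concentrated} source distribution $(1,0)$ on $\mathbf{2}$, which is exactly what you no longer have when $\lambda\in(0,1)$. (A square of the kind you suggest does exist, but writing it down requires choices depending on the relative order of $\lambda$ and $\mu$, together with a cancellation argument that uses $c<\infty$ to subtract a finite $g$-value inside $[0,\infty]$; none of this appears in your proposal.) Note also that lower semicontinuity plays no role here: with $c<\infty$ and $r$ of full support everything is finite and the functorial equations determine $F$ algebraically.

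For comparison, the paper sidesteps the induction entirely. It builds a single commutative square with corner $(X+X,\,p\oplus 0)$ that works uniformly for all $X$: the left side lies in $\FP$, the bottom is $g(\alpha)$ for a small auxiliary parameter $\alpha<\min_x r_x$, the top is a convex combination over $x\in X$ of $g$-type morphisms with arguments $\alpha\,p_x/r_x$, and the right side is the unknown. The parameter $\alpha$ cancels in the resulting equation, leaving exactly $c\,S(p,r)$. This is both shorter and more transparent than the induction-plus-base-case route you outline, and it exhibits the missing construction explicitly.
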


\begin{proof}
Choose $\alpha\in(0,1)$ such that $\alpha < r_x$ for all $x \in X$. The decisive step is to consider the commutative square
\[
\xymatrix@!=1.5cm{ (X + X, p\oplus 0) \ar@/_.4cm/[rr]_{\langle 1_X,1_X\rangle} \ar@/_.4cm/[dd]_{!_X + !_X} && (X,p) \ar@/_.4cm/[dd]_{!_X} \ar@{~>}@/_.4cm/[ll]_s \\\\
 (\mathbf{2},(1,0)) \ar@/_.4cm/[rr]_{!_{\mathbf{2}}} \ar@{~>}@/_.4cm/[uu]_t && (\mathbf{1},1) \ar@{~>}@/_.4cm/[uu]_r \ar@{~>}@/_.4cm/[ll]_{q(\alpha)} }
\]
where the stochastic matrices $s$ and $t$ are given by
\[
s = 
\left(\begin{array}{ccc}
\alpha \tfrac{p_1}{r_1} &&  \makebox(0,0){\text{\huge0}} \\ & \ddots &  \\  \makebox(0,0){\text{\huge0}}  && \alpha \tfrac{p_n}{r_n} \\ \\ 1-\alpha\tfrac{p_1}{r_1} &&  \makebox(0,0){\text{\huge0}}  \\ &\ddots& \\ 
\makebox(0,0){\text{\huge0}}  && 1-\alpha\tfrac{p_n}{r_n} \end{array}\right)  ,
\qquad t = 
\left(\begin{array}{cc}
p_1 & \frac{r_1 -  \alpha p_1}{1-\alpha}\\\vdots & \vdots\\ 
p_n & \frac{r_n -\alpha p_n}{1-\alpha}\end{array}\right).
\]
The second column of $t$ is only relevant for commutativity. The left vertical morphism is in $\FP$, while we already know that the lower horizontal morphism evaluates to $g(\alpha)=-c \ln\alpha$ under the functor $F$. Hence the diagonal of the square gets assigned the value $-c \ln\alpha$ under $F$.
On the other hand, the upper horizontal morphism is actually a convex linear combination of morphisms
\[
\xymatrix{ (\mathbf{2},(1,0)) \ar@/_.4cm/[rr]_{!_\mathbf{2}} && (\mathbf{1},1) \ar@{~>}@/_.4cm/[ll]_{q\left(\alpha\tfrac{p_x}{r_x}\right)} }  ,
\]
one for each $x \in X$, with the probabilities $p_x$ as coefficients.  Thus, composing this with the right vertical morphism we get a morphism
to which $F$ assigns the value
\[
-c\sum_{x\in X} p_x \ln\left(\alpha\frac{p_x}{r_x}\right) + F\left( \xymatrix{ (X,p) \ar@/_/[rr]_{!_X} && (\mathbf{1},1) \ar@{~>}@/_/[ll]_r } \right) .
\]
Thus, we obtain
\[
-c\sum_{x\in X} p_x \ln\left(\alpha\frac{p_x}{r_x}\right) + F\left( \xymatrix{ (X,p) \ar@/_/[rr]_{!_X} && (\mathbf{1},1) \ar@{~>}@/_/[ll]_r } \right) = - c \ln \alpha
\]
and because $c < \infty$, we can simplify this to
\[   
F\left( \xymatrix{ (X,p) \ar@/_/[rr]_{!_X} && (\mathbf{1},1) \ar@{~>}@/_/[ll]_r } \right)  
= \displaystyle{ c \sum_{x\in X} p_x \ln\left(\frac{p_x}{r_x}\right)  }
\]
This is the desired result, Equation~\eqref{showF}.
\end{proof}

\begin{lem}
\label{lem:4.2}
Equation~\eqref{showF} holds if $c < \infty$ and $\supp(p) \subseteq \supp (r)$.
\end{lem}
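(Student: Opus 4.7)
The plan is to reduce this to Lemma \ref{lem:4.1} by restricting attention to $X' := \supp(r)$. The hypothesis $\supp(p) \subseteq \supp(r)$ guarantees that $p$ restricts to a genuine probability distribution $p|_{X'}$ on $X'$, and by construction $r|_{X'}$ has full support on $X'$. If I can produce a morphism in $\FP$ from $(X,p)$ to $(X',p|_{X'})$ that composes with $(!_{X'},r|_{X'})$ to yield $(!_X,r)$, then functoriality together with the vanishing of $F$ on $\FP$ will transport the full-support formula from $X'$ back to $X$.

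Concretely, I would pick a basepoint $x_0 \in X'$ and let $f\maps X \to X'$ be the retraction that is the identity on $X'$ and collapses $X\setminus X'$ onto $x_0$. Because $p$ vanishes on $X\setminus X'$, this $f$ is measure-preserving as a map $(X,p) \to (X',p|_{X'})$. Take $\iota\maps X' \leadsto X$ to be the deterministic inclusion viewed as a stochastic map. Then $f\circ\iota = 1_{X'}$ is immediate, and $\iota\circ p|_{X'} = p$ since the pushforward of $p|_{X'}$ along $\iota$ just extends it by zero on $X \setminus X'$ — which agrees with $p$ precisely because $\supp(p)\subseteq X'$. So $(f,\iota)\maps (X,p)\to (X',p|_{X'})$ lies in $\FP$, and $F(f,\iota)=0$.

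By the same reasoning, $\iota\circ r|_{X'}$ is the extension-by-zero of $r|_{X'}$, namely $r$ itself, so the composite $(!_{X'},r|_{X'})\circ(f,\iota)$ in $\FinStat$ equals $(!_X,r)$. Functoriality of $F$ together with $F(f,\iota)=0$ then yields
\[
F\!\left(\,\xymatrix{ (X,p) \ar@/_/[rr]_{!_X} && (\mathbf{1},1) \ar@{~>}@/_/[ll]_r }\,\right) = F\!\left(\,\xymatrix{ (X',p|_{X'}) \ar@/_/[rr]_{!_{X'}} && (\mathbf{1},1) \ar@{~>}@/_/[ll]_{r|_{X'}} }\,\right),
\]
and Lemma \ref{lem:4.1} applied to $X'$ evaluates the right-hand side as $c\sum_{x\in X'} p_x\ln(p_x/r_x)$. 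For $x\notin X'$ the support hypothesis forces $p_x=0$, so such terms contribute $0$ under the standing conventions, and the sum extends trivially to all of $X$. This gives Equation \eqref{showF}.

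The only delicate step is the verification that $(f,\iota)\in\FP$; everything else is routine. The identity $\iota\circ p|_{X'} = p$ is exactly where the support hypothesis $\supp(p)\subseteq \supp(r)$ enters, and without it the construction would fail — which is consistent with the fact that relative entropy is infinite when this inclusion fails, the regime left for Lemma \ref{lem:claim6}.
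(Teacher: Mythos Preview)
Your proof is correct and follows essentially the same approach as the paper: both restrict to $X' = \supp(r)$ via a retraction $X\to X'$ paired with the inclusion $X'\hookrightarrow X$ as stochastic right inverse, verify this morphism lies in $\FP$, and then invoke functoriality together with Lemma~\ref{lem:4.1} on the full-support pair $(p|_{X'},r|_{X'})$. The only difference is cosmetic: you specify the retraction explicitly by collapsing $X\setminus X'$ to a chosen basepoint, whereas the paper allows any map $X\to\supp(r)$ restricting to the identity on $\supp(r)$.
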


\begin{proof}
This can be reduced to the previous case by considering the commutative triangle
\[
\xymatrix@!=1.4cm{ (X,p) \ar@/_.4cm/[drr]_{!_{X}} \ar@/_.4cm/[dd] \\
 && (\mathbf{1},1) \ar@{~>}@/_.4cm/[ull]_{r} \ar@{~>}@/_.4cm/[dll]_{\bar{r}} \\
 (\supp(r),\bar{p}) \ar@{~>}@/_.4cm/[uu] \ar@/_.4cm/[urr]_{\;\; !_{\,\supp(r)}} }
\]
in which $\bar{p} =p|_{\supp(r)}$ and $\bar{r} =r|_{\supp(r)}$, and the vertical morphism consists of any map $X\to \supp(r)$ that restricts to the identity on  $\supp(r)$ and, as its stochastic right inverse, the inclusion $\supp(r)\hookrightarrow X$. This morphism lies in $\FP$.
\end{proof}

\begin{lem}
\label{lem:4.3}
Equation~\eqref{showF} holds if $0<c<\infty$.
\end{lem}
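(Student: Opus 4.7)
The case $\supp(p)\subseteq\supp(r)$ is already handled by Lemma~\ref{lem:4.2}, so the plan is to treat the remaining case $\supp(p)\not\subseteq\supp(r)$, where the right-hand side of~\eqref{showF} is $\infty$ and we must show $F((X,p)\to(\mathbf{1},1),r)=\infty$. I would fix $x_0\in X$ with $p_{x_0}>0$ and $r_{x_0}=0$ and split the argument into two steps: first the special situation where the source is a point mass, then the general case reduced to it by a convex-combination-plus-fold argument.

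First I would prove $F(M_{x_0})=\infty$, where $M_{x_0}:=((X,\delta_{x_0})\to(\mathbf{1},1),r)$. The idea is to factor through $\mathbf{2}$ using the map $g_1\colon X\to\mathbf{2}$ that sends $\supp(r)$ to $0$ and its complement (in particular $x_0$) to $1$, with a stochastic right inverse $t_1$ satisfying $t_1(0)=r$ (as a distribution on $\supp(r)$) and $t_1(1)=\delta_{x_0}$. The first factor $(g_1,t_1)\colon(X,\delta_{x_0})\to(\mathbf{2},(0,1))$ has pushforward hypothesis equal to its source $\delta_{x_0}$, so by Lemma~\ref{lem:4.2} its $F$-value is $0$. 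For the composite hypothesis on $X$ to reproduce $r$, the bridge $(\mathbf{2},(0,1))\to(\mathbf{1},1)$ must carry hypothesis $(1,0)$; pre-composing with the swap isomorphism on $\mathbf{2}$ (which has $F$-value $0$) identifies this bridge with the defining morphism of $g(0)=-c\ln 0=\infty$ (since $c>0$), so functoriality yields $F(M_{x_0})=\infty$.

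For the general case, assume $0<p_{x_0}<1$ (the subcase $p_{x_0}=1$ is precisely the point-mass case just handled). Let $p':=(p-p_{x_0}\delta_{x_0})/(1-p_{x_0})$ and $\widetilde p:=p_{x_0}\delta_{x_0}\oplus(1-p_{x_0})p'$ on $X+X$. I would construct two parallel paths from $(X+X,\widetilde p)$ to $(\mathbf{1},1)$: on one side, compose the fold morphism $\mathrm{fold}\colon(X+X,\widetilde p)\to(X,p)$ specified by $s_{\mathrm{fold}}(x_0)=\delta_{(x_0,1)}$ and $s_{\mathrm{fold}}(x)=\delta_{(x,2)}$ for $x\ne x_0$ (which places it in $\FP$) with $M$; on the other side, compose the convex combination $M_{\mathrm{conv}}:=p_{x_0}M_{x_0}\oplus(1-p_{x_0})M'$, where $M':=((X,p')\to(\mathbf{1},1),r)$, with the bridge $(\mathbf{2},(p_{x_0},1-p_{x_0}))\to(\mathbf{1},1)$ carrying hypothesis $(0,1)$. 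The hard part will be verifying that these two constructions yield the same morphism in $\FinStat$; the essential point is that $r_{x_0}=0$ forces both composite hypotheses on $X+X$ to vanish on the first copy and to equal $r$ on the second. Once equality of these two morphisms is established, functoriality together with $\mathrm{fold}\in\FP$ gives that the common $F$-value equals $F(\mathrm{fold})+F(M)=F(M)$, while convex linearity combined with the point-mass result gives $F(M_{\mathrm{conv}})=p_{x_0}\cdot\infty+(1-p_{x_0})F(M')=\infty$, so the common value is also $\infty+F(\mathrm{bridge})=\infty$; hence $F(M)=\infty$, as required.
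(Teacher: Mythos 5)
Your proof is correct, but it takes a genuinely different route from the paper's. The paper handles the case $\supp(p)\not\subseteq\supp(r)$ with a single commutative triangle whose apex is $(X+\mathbf{1},\,p\oplus 0)$: the morphism down to $(X,p)$ (identity on $X$, extra point sent to $x_0$) admits a one-parameter family of stochastic right inverses, and convex linearity shows $F$ assigns it the value $-p_{x_0}c\ln\alpha$; since the composite is independent of $\alpha$ while this term varies, functoriality forces $F$ of the original morphism to be $\infty$. You instead pin down the source of the infinity explicitly: you show the ``bridge'' $\bigl((\mathbf{2},(0,1))\to(\mathbf{1},1),\,(1,0)\bigr)$ is, up to the swap isomorphism, the defining morphism of $g(0)=-c\ln 0=\infty$, deduce $F(M_{x_0})=\infty$ for point masses by factoring through $\mathbf{2}$, and then propagate to general $p$ via the fold/convex-combination two-path argument (where, as you note, $r_{x_0}=0$ is exactly what makes the two composite hypotheses on $X+X$ agree --- I checked this and it holds). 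Both arguments rest on the same ingredients (functoriality, convex linearity, vanishing on $\FP$, and the determination of $g$), but the paper's varying-$\alpha$ trick avoids invoking $g(0)$ at all, whereas yours is more modular and makes transparent that the divergence is precisely $g(0)=\infty$. One cosmetic point: the $F$-value of your first factor $(g_1,t_1)$ is zero simply because that morphism lies in $\FP$ (its composite hypothesis $t_1\circ(0,1)=\delta_{x_0}$ equals the source distribution); citing Lemma~\ref{lem:4.2}, which concerns morphisms into $(\mathbf{1},1)$, is not quite the right reference, though the conclusion is unaffected.
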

 
\begin{proof}
We already know by Lemma \ref{lem:4.2} that this holds
when $\supp (p) \subseteq \supp (r)$,
so assume otherwise.  Our task is then show that 
\[  F\left( \xymatrix{ (X,p) \ar@/_/[rr]_{!_X} && (\mathbf{1},1) \ar@{~>}@/_/[ll]_r } \right) = \infty  .
\]
To do this, choose $x \in X$ with $p_x>0=r_x$, and consider the commutative triangle
\[
\xymatrix@!=1.4cm{ (X + \mathbf{1},p\oplus 0) \ar@/_.4cm/[drr]_{!_{X + \mathbf{1}}} \ar@/_.4cm/[dd]_f \\
 && (\mathbf{1},1) \ar@{~>}@/_.4cm/[ull]_{r\oplus 0} \ar@{~>}@/_.4cm/[dll]_{r} \\
 (X,p) \ar@{~>}@/_.4cm/[uu]_s \ar@/_.4cm/[urr]_{!_X} }
\]
in which $f$ maps $X$ to itself by the identity and sends the unique element of $\mathbf{1}$ to $x$. This function has a one-parameter family of stochastic right inverses, and we take the arrow $s \maps X\leadsto X + \mathbf{1}$ to be any element of this family.  

To construct these stochastic right inverses, let $Y = X - \{x\}$.  This set is nonempty because the probability distribution $r$ is supported on it.  If $p_x < 1$ let $q$ be 
the probability distribution on $Y$ given by
\[         q = \frac{1}{1 - p_x} p|_Y  , \]
while if $p_x = 1$ let $q$ be an arbitrary probability distribution on $Y$.
For any $\alpha \in [0,1]$, the convex linear combination
\begin{equation}
\label{eq:convex_comb}
(1-p_x) \left(\!\! \xymatrix{ (Y, q) \ar@/_/[rr]_{1_Y} && (Y,  q) \ar@{~>}@/_/[ll]_{1_Y} } \!\!\right) \oplus 
p_x \left(\!\!\xymatrix{ (\mathbf{2},(1,0)) \ar@/_/[rr]_{!_\mathbf{2}} && (\mathbf{1},1) \ar@{~>}@/_/[ll]_{q(\alpha)} } \!\! \right)
\end{equation}
is a morphism in $\FinStat$.   There is a natural isomorphism from
its domain to that of the desired morphism $(f,s)$:
\[          (1 - p_x) (Y,q) \; \oplus \; p_x (\mathbf{2}, (1,0)) 
\; \cong \; (X + \mathbf{1}, p \oplus 0)    \]
and similarly for its codomain:
\[        (1 - p_x) (Y,q) \; \oplus \; p_x (\mathbf{1}, 1) 
\; \cong \; (X, p ) .   \]
Composing \eqref{eq:convex_comb} with these fore and aft, we obtain 
the desired morphism
\[   \xymatrix{ (X + \mathbf{1}, p \oplus 0) \ar@/_.2cm/[rr]_{s} && (X,  p) \ar@{~>}@/_.2cm/[ll]_{f} } . \]

Using convex linearity and the fact that $F$ vanishes on isomorphisms, 
\eqref{eq:convex_comb} implies that $F(f,s) = - p_x c \ln\alpha$.  Applying $F$ to our commutative triangle, we thus obtain 
\[
F\left( \xymatrix{ (X + \mathbf{1},p \oplus 0) \ar@/_/[rr]_{!_{X + \mathbf{1}}} && (\mathbf{1},1) \ar@{~>}@/_/[ll]_{r \oplus 0} } \right) = -p_x c \ln\alpha + F\left( \xymatrix{ (X,p) \ar@/_/[rr]_{!_X} && (\mathbf{1},1) \ar@{~>}@/_/[ll]_r } \right).
\]
Since $p_x, c>0$, the first term on the right-hand side depends on $\alpha$, but no other terms do. This is only possible if both other terms are infinite. This proves
\[
F\left( \xymatrix{ (X,p) \ar@/_/[rr]_{!_X} && (\mathbf{1},1) \ar@{~>}@/_/[ll]_r } \right) = \infty  ,
\]
as was to be shown.
\end{proof}

\begin{lem}
\label{lem:4.4}
Equation~\eqref{showF} holds if $c = 0$.
\end{lem}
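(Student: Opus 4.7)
The plan is to combine Lemma~\ref{lem:4.2} (applied with the constant $c=0$) with the lower semicontinuity hypothesis on $F$. When $\supp(p)\subseteq\supp(r)$, the relative entropy $S(p,r)$ is finite, so Lemma~\ref{lem:4.2} immediately yields
\[
F\left( \xymatrix{ (X,p) \ar@/_/[rr]_{!_X} && (\mathbf{1},1) \ar@{~>}@/_/[ll]_r } \right) = 0\cdot S(p,r) = 0.
\]
The only case not handled this way is $\supp(p)\not\subseteq\supp(r)$, where $S(p,r)=\infty$; here the convention $0\cdot\infty=0$ still demands that $F$ vanish, but Lemma~\ref{lem:4.2} no longer applies directly to $r$.

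To handle the remaining case, I would approximate $r$ by full-support distributions. Fix the uniform distribution $u$ on $X$ and set $r^i=(1-1/i)\,r+(1/i)\,u$ for $i\in\mathbb{N}$. Each $r^i$ has full support, so in particular $\supp(p)\subseteq\supp(r^i)$, and Lemma~\ref{lem:4.2} applies to give $F(!_X, r^i)=0\cdot S(p,r^i)=0$ because $S(p,r^i)$ is finite. The morphisms $(!_X, r^i)\maps(X,p)\to(\mathbf{1},1)$ share a fixed domain and codomain while the hypothesis $r^i$ converges to $r$ in $P(X)$, so the sequence converges to $(!_X,r)$ in the topology on $\FinStat$ described in Section~\ref{sec:semicont}. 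Lower semicontinuity of $F$ then yields
\[
F(!_X, r)\le \liminf_{i\to\infty} F(!_X, r^i)=0,
\]
and since $F$ takes values in $[0,\infty]$, this forces $F(!_X, r)=0$, establishing Equation~\eqref{showF} in the case $c=0$.

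The main (modest) obstacle is really conceptual rather than computational: one has to see that lower semicontinuity is precisely strong enough to enforce the convention $0\cdot\infty=0$ via approximation, even though a continuity statement in the opposite direction would generally fail for relative entropy. Once the perturbation $r^i$ is in hand, no further calculation is needed; everything follows from Lemma~\ref{lem:4.2} and the hypotheses on $F$.
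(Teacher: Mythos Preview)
Your proof is correct and follows essentially the same approach as the paper: approximate $r$ by full-support distributions, invoke the already-established case $c<\infty$ with $\supp(p)\subseteq\supp(r)$ to see that $F$ vanishes on the approximating morphisms, and then use lower semicontinuity to conclude $F(!_X,r)=0$. The paper states this more tersely (and without the explicit perturbation $r^i$ or the preliminary case split), but the argument is the same; your citation of Lemma~\ref{lem:4.2} is arguably cleaner than the paper's reference to Lemma~\ref{lem:4.3}, since the latter is stated only for $0<c<\infty$.
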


\begin{proof}
That~\eqref{showF} holds in this case is a simple consequence of lower semicontinuity: approximate $r$ by a family of probability distributions whose support is all of $X$.  By Lemma \ref{lem:4.3}, $F$ maps all the resulting morphisms to $0$.  Thus, the same must be true for the original $r$.
\end{proof}

To conclude the proof of Lemma~\ref{lem:converse}, we need to show
Equation~\eqref{showF} holds if $c = \infty$.   To do this, it suffices to assume $c = \infty$ and show that
\[  F\left( \xymatrix{ (X,p) \ar@/_/[rr]_{!_X} && (\mathbf{1},1) \ar@{~>}@/_/[ll]_r } \right) = \infty   \]
whenever $p \ne r$.  The reasoning in the previous lemmas will not help 
us now, since in Lemma \ref{lem:4.1} we needed $c < \infty$.  As we shall see in Proposition~\ref{counterex}, the proof for $c = \infty$ must use lower semicontinuity.  However, since lower semicontinuity only produces an \emph{upper} bound on the value of $F$ at a limit point, it will have to be used in proving the contrapositive statement: if $F$ is finite on some morphism of the above form with $p \ne r$, then it is finite on some morphism of the form~\eqref{g}. Now in order to infer that the value of $F$ at the limit point of a converging family of distributions is finite, it is not enough to know that the value of $F$ is finite at each element of the family: one needs a \emph{uniform} bound.  The need to derive such a uniform bound is the reason for the complexity of the following argument.

In what follows we assume that $p$ and $r$ are probability distributions
on $X$ with $p \ne r$ and
\[  F\left( \xymatrix{ (X,p) \ar@/_/[rr]_{!_X} && (\mathbf{1},1) \ar@{~>}@/_/[ll]_r } \right) < \infty .  \]
We develop a series of consequences culminating in Lemma 
\ref{lem:claim6}, in which we see that $g(\alpha)$ is finite for 
some $\alpha <1$.  This implies $c < \infty$, thus demonstrating
the contrapositive of our claim that Equation~\eqref{showF} holds if 
$c = \infty$.

\begin{lem}
\label{lem:claim1}
There exist $\alpha,\beta\in[0,1]$ with $\alpha\neq\beta$ such that
\beq
\label{alphabeta}
h(\alpha,\beta) = F\left( \xymatrix{ (\mathbf{2},q(\alpha)) \ar@/_/[rr]_{!_{\mathbf{2}}} && (\mathbf{1},1) \ar@{~>}@/_/[ll]_{q(\beta)} } \right)
\eeq
is finite.
\end{lem}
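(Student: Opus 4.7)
The plan is to factor the given finite-valued morphism through a two-element set. Since $p \neq r$, pick some $x \in X$ with $p_x \neq r_x$, and set $\alpha = p_x$, $\beta = r_x$, so that $\alpha \neq \beta$ automatically. Let $f \maps X \to \mathbf{2}$ be the function collapsing $x$ to $0$ and every other element to $1$; then $f$ is measure-preserving from $(X,p)$ to $(\mathbf{2}, q(\alpha))$, and $!_\mathbf{2} \circ f = !_X$.

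Next I would construct a stochastic right inverse $s \maps \mathbf{2} \leadsto X$ of $f$ with the additional property that $s \circ q(\beta) = r$. The requirement $f \circ s = 1_\mathbf{2}$ together with Lemma~\ref{lem:finstat} forces $s_{y,0} = \delta_{yx}$ and forces $s_{y,1}$ to be supported on $X \setminus \{x\}$. If $\beta < 1$, the remaining condition $s \circ q(\beta) = r$ uniquely determines $s_{y,1} = r_y/(1-\beta)$ for $y \neq x$, which is a bona fide probability distribution since $\sum_{y \neq x} r_y = 1 - \beta$. If $\beta = 1$, then $r = \delta_x$, and any probability distribution on $X \setminus \{x\}$ (which is nonempty, as otherwise $p = r$) may be used for $s_{\cdot,1}$; then $s \circ q(\beta) = s_{\cdot,0} = \delta_x = r$. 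In either case $(f,s)$ is a well-defined morphism from $(X,p)$ to $(\mathbf{2}, q(\alpha))$ in $\FinStat$.

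By construction, the composite in $\FinStat$
\[
(X,p) \xrightarrow{\,(f,s)\,} (\mathbf{2}, q(\alpha)) \xrightarrow{\,(!_\mathbf{2},\, q(\beta))\,} (\mathbf{1},1)
\]
has underlying measure-preserving function $!_\mathbf{2} \circ f = !_X$ and stochastic right inverse $s \circ q(\beta) = r$, so it equals the given morphism $(!_X, r)$. Functoriality of $F$ thus yields
\[
F(!_X, r) \;=\; F(f,s) \;+\; h(\alpha,\beta).
\]
Since the left-hand side is finite by hypothesis and both summands on the right lie in $[0,\infty]$, both must be finite; in particular $h(\alpha,\beta) < \infty$ with $\alpha \neq \beta$, which is the claim.

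There is no serious obstacle here beyond bookkeeping; the only mildly delicate point is handling the boundary cases $\beta \in \{0,1\}$ when constructing $s$, which is dispatched above by noting that the constraints from $f\circ s = 1_\mathbf{2}$ and $s\circ q(\beta) = r$ never conflict.
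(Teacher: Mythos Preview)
Your proof is correct and is essentially the same as the paper's: both factor the morphism $(!_X,r)$ through $(\mathbf{2},q(\alpha))$ by collapsing a point where $p$ and $r$ differ, choosing the stochastic section so that $s\circ q(\beta)=r$, and then invoking functoriality to bound $h(\alpha,\beta)$ by the finite value $F(!_X,r)$. The only difference is cosmetic---you send the distinguished point to $0\in\mathbf{2}$ and set $\alpha=p_x$, $\beta=r_x$, whereas the paper sends it to $1$ and sets $\alpha=1-p_y$, $\beta=1-r_y$.
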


\begin{proof} 
Choose some $y\in X$ with $p_y\neq r_y$, and define
$f \maps X \to \mathbf{2}$ by
\[
f(x) = \begin{cases} 1 & \textrm{ if } x = y \\ 0 & \textrm{ if } x\neq y. \end{cases}
\]
Put $\beta =1-r_y$. Then $f$ has a stochastic right inverse $s$ given by
\[
s_{x j} = \begin{cases} \displaystyle{\frac{r_x}{\beta}(1 - \delta_{x y})} & \textrm{ if } j=0 \\  \delta_{x y} & \textrm{ if } j=1 \end{cases}
\]
where, if $\beta=0$, we interpret the fractions as forming an arbitrarily chosen probability distribution on $X - \{y\}$. Setting $\alpha =1-p_y$, we have a commutative triangle
\[
\xymatrix@!=1.4cm{ (X,p) \ar@/_.4cm/[drr]_{!_{X}} \ar@/_.4cm/[dd]_f \\
 && (\mathbf{1},1) \ar@{~>}@/_.4cm/[ull]_{r} \ar@{~>}@/_.4cm/[dll]_{q(\beta)} \\
 (\mathbf{2},q(\alpha)) \ar@{~>}@/_.4cm/[uu]_s \ar@/_.4cm/[urr]_{!_{\mathbf{2}}} }
\]
and the claim follows from functoriality.
\end{proof}

\begin{lem}
\label{lem:claim2}
$h(\alpha',\tfrac{1}{2})$ is finite for some $\alpha'<\tfrac{1}{2}$.
\end{lem}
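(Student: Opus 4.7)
The plan is to use the finite value $h(\alpha_0, \beta_0)$ from Lemma~\ref{lem:claim1} to produce a new finite-$F$ morphism $(\mathbf{3}, p) \to (\mathbf{1},1)$ whose hypothesis $w$ has an entry equal to $1/2$ at a position $y$ where $p_y \neq w_y$. Then the factorization technique from the proof of Lemma~\ref{lem:claim1} applied at $y$ yields $h(\alpha'',1/2) < \infty$ for some $\alpha'' \neq 1/2$, and swap symmetry completes the proof by giving $h(\alpha', 1/2) < \infty$ with $\alpha' < 1/2$.

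First, the swap isomorphism $(\mathbf{2}, q(\alpha)) \cong (\mathbf{2}, q(1-\alpha))$, under which $F$ is invariant by functoriality, yields the symmetry $h(\alpha, \beta) = h(1-\alpha, 1-\beta)$. If $\beta_0 = 1/2$, one further swap (if needed) gives the claim immediately. Otherwise, swap if needed to arrange $\beta_0 > 1/2$. Set $\lambda := 1/(2\beta_0) \in (1/2, 1)$ and form the convex combination $\lambda \cdot [(!_{\mathbf{2}}, q(\beta_0))_{(\mathbf{2}, q(\alpha_0))}] \oplus (1-\lambda) \cdot [\id_{(\mathbf{1},1)}]$, a morphism $(\mathbf{3}, (\lambda\alpha_0, \lambda(1-\alpha_0), 1-\lambda)) \to (\mathbf{2}, q(\lambda))$ whose $F$-value is $\lambda h(\alpha_0, \beta_0)$ by convex linearity. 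Post-composing with the morphism $(!_{\mathbf{2}}, q(\lambda)) \maps (\mathbf{2}, q(\lambda)) \to (\mathbf{1}, 1)$, which lies in $\FP$ since $q(\lambda) \circ 1 = q(\lambda)$, yields by functoriality a morphism $(!_{\mathbf{3}}, w) \maps (\mathbf{3}, p) \to (\mathbf{1}, 1)$ of the same finite $F$-value, where $p = (\alpha_0/(2\beta_0),\, (1-\alpha_0)/(2\beta_0),\, (2\beta_0-1)/(2\beta_0))$ and the composite hypothesis is $w = (\lambda\beta_0,\, \lambda(1-\beta_0),\, 1-\lambda) = (1/2,\, (1-\beta_0)/(2\beta_0),\, (2\beta_0-1)/(2\beta_0))$. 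The choice $\lambda = 1/(2\beta_0)$ is forced by the requirement that the first entry of $w$ equal $1/2$.

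Applying the factorization technique from the proof of Lemma~\ref{lem:claim1} to $(!_{\mathbf{3}}, w)$ at $y=0$ is legitimate since $p_0 = \alpha_0/(2\beta_0) \neq 1/2 = w_0$ (equivalent to $\alpha_0 \neq \beta_0$). The technique produces a morphism $(f, s) \maps (\mathbf{3}, p) \to (\mathbf{2}, q(1-p_0))$ in $\FinStat$ together with $(!_{\mathbf{2}}, q(1/2)) \maps (\mathbf{2}, q(1-p_0)) \to (\mathbf{1},1)$ whose composite is $(!_{\mathbf{3}}, w)$; by functoriality and nonnegativity of $F$, $h(1-p_0,\, 1/2) \leq \lambda h(\alpha_0, \beta_0) < \infty$. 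Writing $\alpha' := 1-p_0 = (2\beta_0 - \alpha_0)/(2\beta_0)$, one checks $\alpha' < 1/2 \iff \alpha_0 > \beta_0$; in the remaining subcase $\alpha_0 < \beta_0$, swap symmetry gives $h(1-\alpha',\, 1/2) < \infty$ with $1-\alpha' < 1/2$. The main subtlety is identifying the specific $\lambda = 1/(2\beta_0)$ that engineers the $1/2$-entry in $w$; once this is in hand, the rest is routine bookkeeping with swap symmetry and the conclusion of Lemma~\ref{lem:claim1}. Notably, the argument at this stage does not use lower semicontinuity.
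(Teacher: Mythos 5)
Your argument is correct, and it takes a genuinely different route from the paper's. The paper forms a commutative square on the four-element set with distribution $\tfrac12 q(\alpha)\oplus\tfrac12 q(\beta)$, reads off the exact functional equation~\eqref{feq}, and extracts the claim from the fact that every nonnegative term on its right-hand side must be finite; that same identity is then reused as the engine of Lemmas~\ref{lem:claim4} and~\ref{lem:claim5}. You instead dilute the finite-$F$ morphism of Lemma~\ref{lem:claim1} by taking its convex combination with the identity on $(\mathbf{1},1)$, with weight $\lambda=1/(2\beta_0)$ chosen precisely so that the composite prior $w$ on the resulting three-point space has an entry equal to $\tfrac12$, and then re-apply the two-point collapse from the proof of Lemma~\ref{lem:claim1} at that entry, using the swap symmetry $h(\alpha,\beta)=h(1-\alpha,1-\beta)$ to handle the case distinctions. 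The steps check out: the composite $(!_{\mathbf{3}},w)$ has $F$-value $\lambda\, h(\alpha_0,\beta_0)<\infty$ by convex linearity, functoriality, and the vanishing of $F$ on $(!_{\mathbf{2}},q(\lambda))\in\FP$; the collapse at $y=0$ is legitimate because $p_0=\alpha_0/(2\beta_0)\neq\tfrac12=w_0$ exactly when $\alpha_0\neq\beta_0$; and since $\alpha'=1-p_0\neq\tfrac12$, one of $\alpha'$, $1-\alpha'$ lies in $[0,\tfrac12)$. (One cosmetic slip: when $\beta_0=1$ you get $\lambda=\tfrac12$, so $\lambda\in[\tfrac12,1)$ rather than $(\tfrac12,1)$; nothing depends on this.) As for what each approach buys: yours is more modular, reusing the collapse of Lemma~\ref{lem:claim1} as a black box and avoiding the four-point bookkeeping, and it only needs the inequality coming from nonnegativity rather than an exact identity; the paper's heavier square pays for itself because the identity~\eqref{feq} it produces is indispensable for the uniform bounds established afterwards, so the paper would have to derive it anyway.
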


\begin{proof} Choose $\alpha, \beta$ as in Lemma \ref{lem:claim1}.  Consider the commutative square
\[
\xymatrix@!=1.5cm{ (\mathbf{4},\tfrac{1}{2}q(\alpha)\oplus\tfrac{1}{2} q(\beta)) \ar@/_.4cm/[rr]_{\scriptsize{\begin{array}{c}0,1\mapsto 0\\ 2,3\mapsto 1\end{array}}} \ar@/_.4cm/[dd]_{\scriptsize{\begin{array}{c}0,2\mapsto 0\\ 1,3\mapsto 1\end{array}}} && (\mathbf{2},q(\tfrac{1}{2})) \ar@/_.4cm/[dd]_{!_{\mathbf{2}}} \ar@{~>}@/_.4cm/[ll]_s \\\\
 (\mathbf{2},q(\tfrac{\alpha+\beta}{2})) \ar@/_.4cm/[rr]_{!_{\mathbf{2}}} \ar@{~>}@/_.4cm/[uu]_t && (\mathbf{1},1) \ar@{~>}@/_.4cm/[uu]_{q\left(\tfrac{1}{2}\right)} \ar@{~>}@/_.4cm/[ll]_{q(\beta)} }
\]
with the stochastic matrices
\[
s = \left( \begin{array}{cc} \beta & 0 \\ 1-\beta & 0 \\ 0 & \beta \\ 0 & 1-\beta \end{array} \right) = q(\beta) \oplus q(\beta) ,\qquad t = \left( \begin{array}{cc} \tfrac{1}{2} & 0 \\ 0 & \tfrac{1}{2} \\ \tfrac{1}{2} & 0 \\ 0 & \tfrac{1}{2} \end{array} \right).
\]

The right vertical morphism in this square lies in $\FP$, so $F$ vanishes on this.  The top horizontal morphism is a convex linear combination
\[
\frac{1}{2} \left( 
\xymatrix{ (\mathbf{2},q(\alpha)) \ar@/_/[rr]_{!_{\mathbf{2}}} && (\mathbf{1},1) \ar@{~>}@/_/[ll]_{q(\beta)} } \right)
\oplus 
\frac{1}{2} \left( 
\xymatrix{ (\mathbf{2},q(\beta)) \ar@/_/[rr]_{!_{\mathbf{2}}} && (\mathbf{1},1) \ar@{~>}@/_/[ll]_{q(\beta)} } \right) ,
\]
where the second term is in $\FP$.  Thus, by convex linearity and Lemma~\ref{lem:claim1}, $F$ of the top horizontal morphism equals $\frac{1}{2} h(\alpha,\beta) < \infty$.  By functoriality, $F$ is $\frac{1}{2} h(\alpha,\beta)$ on the composite of the top and right morphisms.  

This implies that the value of $F$ on the other two morphisms in the square must also be finite.  Let us compute $F$ of their composite in another way.  By definition, $F$ of the bottom horizontal morphism is $h(\frac{\alpha+\beta}{2},\beta)$.   The left vertical morphism is a convex linear combination
\[     \frac{\alpha + \beta}{2} \left(\!\!\!\xymatrix{ (\mathbf{2},q(\frac{\alpha}{\alpha+\beta})) \ar@/_/[rr]_{!_{\mathbf{2}}} && (\mathbf{1},1) \ar@{~>}@/_/[ll]_{q(\frac{1}{2})} } \!\!\! \right)  \oplus 
\frac{2 - \alpha - \beta}{2}  \left(\!\!\! \xymatrix{ (\mathbf{2},q(\frac{1-\alpha}{2-\alpha-\beta})) \ar@/_/[rr]_{!_{\mathbf{2}}} && (\mathbf{1},1) \ar@{~>}@/_/[ll]_{q(\frac{1}{2})} }\!\!\! \right) .
\]
By functoriality and convex linearity, $F$ on the composite of these two morphisms is thus
\[
\frac{\alpha+\beta}{2}\cdot h\!\left(\frac{\alpha}{\alpha+\beta},\frac{1}{2}\right) + \frac{2-\alpha-\beta}{2}\cdot h\!\left(\frac{1-\alpha}{2-\alpha-\beta},\frac{1}{2}\right) + h\left(\frac{\alpha+\beta}{2},\beta\right).
\]

Comparing these computations, we obtain
\begin{align}
\begin{split}
\label{feq}
h(\alpha,\beta) = ( & \alpha  +\beta)\cdot h\!\left(\frac{\alpha}{\alpha+\beta},\frac{1}{2}\right) \\
& + (2-\alpha-\beta)\cdot h\!\left(\frac{1-\alpha}{2-\alpha-\beta},\frac{1}{2}\right) + 2\cdot h\!\left(\frac{\alpha+\beta}{2},\beta\right).
\end{split}
\end{align}
This shows that each term on the right-hand side must be finite.
Note that the coefficients in front of these terms do not vanish, since $\alpha\neq\beta$.    If $\alpha < \beta$ then we can take $\alpha' =\tfrac{\alpha}{\alpha+\beta}$, so that
$\alpha' < \tfrac{1}{2}$, and the first term on the right-hand side gives $h(\alpha', \tfrac{1}{2}) < \infty$.    If $\alpha > \beta$ we can take $\alpha' = \tfrac{1 - \alpha}{2 - \alpha - \beta}$, so that $\alpha' < \tfrac{1}{2}$, and the second term on the right-hand side gives that $h(\alpha', \tfrac{1}{2}) < \infty$. 
\end{proof}

\begin{lem}
\label{lem:claim3}
For $\alpha\leq\beta\leq\tfrac{1}{2}$, we have $h(\beta,\tfrac{1}{2})\leq h(\alpha,\tfrac{1}{2})$.
\end{lem}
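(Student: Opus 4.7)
The plan is to realize $h(\alpha,1/2)$ as the $F$-value of a composite morphism $(\mathbf{4},\pi)\to(\mathbf{1},1)$ in $\FinStat$, and then exhibit a \emph{second} factorization of the same composite whose $F$-value equals $F(\text{auxiliary morphism})+h(\beta,1/2)$, with the auxiliary morphism having nonnegative $F$-value. Functoriality of $F$, together with nonnegativity, will then yield $h(\beta,1/2)\le h(\alpha,1/2)$.

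After dispensing with the degenerate case $\alpha=\beta$ (which subsumes $\alpha=1/2$), I would construct a doubly stochastic $2\times 2$ matrix $\sigma$ that carries $q(\alpha)$ to $q(\beta)$ while fixing $q(1/2)$. Symmetry forces $\sigma$ to have diagonal entries $a$ and off-diagonal entries $1-a$, and the explicit value $a=(1-\alpha-\beta)/(1-2\alpha)$ lies in $[0,1]$ precisely because $0\le\alpha<\beta\le 1/2$ with $\alpha<1/2$. Using $\sigma$, I define the joint distribution $\pi_{(i,j)}=q(\alpha)_i\,\sigma_{ji}$ on $\mathbf{4}=\mathbf{2}\times\mathbf{2}$, whose marginals along the two projections $\pi_1,\pi_2\maps\mathbf{4}\to\mathbf{2}$ are $q(\alpha)$ and $q(\beta)$ respectively.

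Next I build two composable pairs in $\FinStat$, both ending in the outer morphism $(!_{\mathbf{2}},q(1/2))\maps(\mathbf{2},\cdot)\to(\mathbf{1},1)$. On the $\pi_1$ side, the optimal stochastic right inverse $u$ determined by $u((i,j)\mid i)=\sigma_{ji}$ places $(\pi_1,u)$ in $\FP$, so the composite's $F$-value is $0+h(\alpha,1/2)$. On the $\pi_2$ side, I pick the stochastic right inverse $t$ defined by $t((i,j)\mid j)=\sigma_{ji}$; verifying that $t$ is genuinely stochastic requires $\sum_i\sigma_{ji}=1$ for each $j$, which is exactly where double stochasticity of $\sigma$ enters. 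The composite's $F$-value is $F(\pi_2,t)+h(\beta,1/2)\ge h(\beta,1/2)$.

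The crux is that these two composite morphisms actually coincide in $\FinStat$: they share the underlying function $!_{\mathbf{4}}$, and a direct computation shows that the composite hypotheses $u\circ q(1/2)$ and $t\circ q(1/2)$ both equal the distribution $\rho_{(i,j)}=\sigma_{ji}/2$ on $\mathbf{4}$. Functoriality then forces $h(\alpha,1/2)=F(\pi_2,t)+h(\beta,1/2)$, and nonnegativity of $F$ finishes the proof. The main conceptual step is spotting the doubly stochastic $\sigma$ and recognizing that double stochasticity is precisely the property that allows the two factorizations to collapse to the same composite; once this observation is in place, the remaining verifications are routine matrix arithmetic.
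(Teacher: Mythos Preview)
Your proposal is correct and is essentially the same argument as the paper's, packaged in slightly different language: the paper writes down the identical commutative square directly (with explicit $4\times 2$ stochastic matrices $s$ and $t$), whereas you organize it via the doubly stochastic $2\times 2$ matrix $\sigma$ and the coupling $\pi$ on $\mathbf{2}\times\mathbf{2}$ with marginals $q(\alpha)$ and $q(\beta)$. The paper's parameter $\gamma$ (chosen so that $\gamma\alpha+(1-\gamma)(1-\alpha)=\beta$) is exactly your $a$, and your two factorizations are the two paths around the paper's square.
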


\begin{proof} By the intermediate value theorem, there exists $\gamma\in[0,1]$ with
\[
\gamma \alpha + (1-\gamma)(1-\alpha) = \beta.
\]
Now let $q(\alpha)\otimes q(\gamma)$ stand for the distribution on $\mathbf{4}$ with weights $(\alpha\gamma,\alpha(1-\gamma),(1-\alpha)\gamma,(1-\alpha)(1-\gamma))$.
The equation above guarantees that the left vertical morphism in this square is well-defined:
\[
\xymatrix@!=1.4cm{ (\mathbf{4},q(\alpha)\otimes q(\gamma)) \ar@/_.4cm/[rr]_{\scriptsize{\begin{array}{c}0,1\mapsto 0\\ 2,3\mapsto 1\end{array}}} \ar@/_.4cm/[dd]_{\scriptsize{\begin{array}{c}0,3\mapsto 0\\ 1,2\mapsto 1\end{array}}} && (\mathbf{2},q(\alpha)) \ar@/_.4cm/[dd]_{!_\mathbf{2}} \ar@{~>}@/_.4cm/[ll]_s \\\\
 (\mathbf{2},q(\beta)) \ar@{~>}@/_.4cm/[uu]_{t} \ar@/_.4cm/[rr]_{!_{\mathbf{2}}} && (\mathbf{1},1) \ar@{~>}@/_.4cm/[ll]_{q\left(\tfrac{1}{2}\right)} \ar@{~>}@/_.4cm/[uu]_{q\left(\tfrac{1}{2}\right)} }
\]
where we take:
\[
s = \left( \begin{array}{cc} \gamma & 0 \\ 1-\gamma & 0 \\ 0 & \gamma \\ 0 & 1-\gamma \end{array} \right), \qquad t = \left( \begin{array}{cc} \gamma & 0 \\ 0 & 1-\gamma \\ 0 & \gamma \\ 1-\gamma & 0 \end{array} \right)
\]
The square commutes and the upper horizontal morphism is in $\FP$, so the value of $F$ on the bottom horizontal morphism is bounded by the value of $F$ on the right vertical one, as was to be shown.  
\end{proof}

In the preceding lemma we are not yet claiming that $h(\alpha, \tfrac{1}{2})$ is finite.  We show this for $\alpha = \tfrac{1}{4}$ in Lemma~\ref{lem:claim4}, and for all $\alpha \in (0,1)$ in Lemma~\ref{lem:claim5}, where we actually obtain a uniform bound.

\begin{lem}
\label{lem:symmetry}
$h(\alpha,\tfrac{1}{2}) = h(1-\alpha,\tfrac{1}{2})$ for all $\alpha\in[0,1]$.
\end{lem}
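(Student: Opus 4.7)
The plan is to exhibit a swap isomorphism in $\FinStat$ that exchanges $q(\alpha)$ with $q(1-\alpha)$ while fixing the uniform distribution $q(\tfrac{1}{2})$, and then invoke the isomorphism-invariance of $F$ established at the start of the proof of Lemma~\ref{lem:converse}.

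First I would let $\sigma \maps \mathbf{2} \to \mathbf{2}$ denote the transposition of $0$ and $1$, viewed simultaneously as a function $\mathbf{2}\to\mathbf{2}$ and as a stochastic map $\mathbf{2}\leadsto\mathbf{2}$. The identities $\sigma \circ q(\alpha) = q(1-\alpha)$ and $\sigma \circ \sigma = 1_\mathbf{2}$ say that $(\sigma,\sigma)$ is a morphism $(\mathbf{2}, q(\alpha)) \to (\mathbf{2}, q(1-\alpha))$ in $\FinStat$, and since $(\sigma,\sigma)\circ(\sigma,\sigma) = (1_\mathbf{2},1_\mathbf{2})$ this morphism is its own two-sided inverse, hence an isomorphism. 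The decisive feature is that $q(\tfrac{1}{2})$ is fixed by the swap: $\sigma \circ q(\tfrac{1}{2}) = q(\tfrac{1}{2})$.

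Next I would assemble the commutative square
\[
\xymatrix@!=1.2cm{
(\mathbf{2}, q(\alpha)) \ar@/_.3cm/[rr]_{!_\mathbf{2}} \ar@/_.3cm/[dd]_{\sigma} && (\mathbf{1}, 1) \ar@/_.3cm/[dd]_{1_\mathbf{1}} \ar@{~>}@/_.3cm/[ll]_{q(\tfrac{1}{2})} \\\\
(\mathbf{2}, q(1-\alpha)) \ar@{~>}@/_.3cm/[uu]_{\sigma} \ar@/_.3cm/[rr]_{!_\mathbf{2}} && (\mathbf{1}, 1) \ar@/_.3cm/[ll]_{q(\tfrac{1}{2})} \ar@{~>}@/_.3cm/[uu]_{1_\mathbf{1}}
}
\]
whose vertical morphisms are the isomorphism just constructed and the identity. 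Commutativity follows from $(!_\mathbf{2},q(\tfrac{1}{2}))\circ(\sigma,\sigma) = (!_\mathbf{2}\circ\sigma,\,\sigma\circ q(\tfrac{1}{2})) = (!_\mathbf{2},q(\tfrac{1}{2}))$, which agrees with the right-then-down composite. By the isomorphism-invariance principle from the proof of Lemma~\ref{lem:converse}, $F$ takes the same value on the top and bottom horizontal morphisms, and by the definition of $h$ in \eqref{alphabeta} this is precisely the equation $h(\alpha,\tfrac{1}{2}) = h(1-\alpha,\tfrac{1}{2})$.

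There is no genuine obstacle here: the argument hinges entirely on the observation that $q(\tfrac{1}{2})$ is the unique fixed point of the swap among distributions on $\mathbf{2}$, which is what makes the backward-arrow hypothesis transform correctly under the isomorphism. The only subtlety worth a sanity check is the opposite-direction composition convention for the stochastic parts in $\FinStat$: composing $(!_\mathbf{2},q(\tfrac{1}{2}))$ on the bottom with $(\sigma,\sigma)$ on the left yields hypothesis $\sigma \circ q(\tfrac{1}{2})$, and the fact that this returns $q(\tfrac{1}{2})$ only when the second argument is $\tfrac{1}{2}$ also explains why the lemma is asserted for that specific value rather than for generic $\beta$.
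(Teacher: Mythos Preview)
Your proof is correct and is essentially the same argument as the paper's: both exploit the swap $\sigma$ on $\mathbf{2}$, which sends $q(\alpha)$ to $q(1-\alpha)$ while fixing $q(\tfrac{1}{2})$. The paper packages this as a commutative triangle with the swap as the vertical leg lying in $\FP$, whereas you package it as a commutative square with the swap as a vertical isomorphism and invoke the isomorphism-invariance principle; since isomorphisms lie in $\FP$ and the identity on $(\mathbf{1},1)$ collapses your square to the paper's triangle, the two presentations are interchangeable.
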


\begin{proof}
Apply functoriality to the commutative triangle
\[
\xymatrix@!=1.4cm{ (\mathbf{2},q(\alpha)) \ar@/_.4cm/[drr]_{!_{\mathbf{2}}} \ar@/_.4cm/[dd]_{\scriptsize{\begin{array}{c}0\mapsto 1\\ 1\mapsto 0\end{array}}} \\
 && (\mathbf{1},1) \ar@{~>}@/_.4cm/[ull]_{q\left(\tfrac{1}{2}\right)} \ar@{~>}@/_.4cm/[dll]_{q\left(\tfrac{1}{2}\right)} \\
 (\mathbf{2},q(\alpha)) \ar@{~>}@/_.4cm/[uu]_{\scriptsize{\begin{array}{c}0\mapsto 1\\ 1\mapsto 0\end{array}}} \ar@/_.4cm/[urr]_{!_{\mathbf{2}}} }
\]
where the vertical morphism is in $\FP$.
\end{proof}

\begin{lem}
\label{lem:claim4} $h(\tfrac{1}{4},\tfrac{1}{2})<\infty$.
\end{lem}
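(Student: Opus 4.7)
My plan is to combine Lemmas~\ref{lem:claim2}, \ref{lem:claim3}, and \ref{lem:symmetry} to localize the finiteness of $h(\cdot, \tfrac{1}{2})$ to an interval around $\tfrac{1}{2}$, and then to cross the remaining gap, if needed, via a targeted use of the functional identity \eqref{feq} from the proof of Lemma~\ref{lem:claim2}.

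Concretely, Lemma~\ref{lem:claim2} furnishes some $\alpha' \in (0, \tfrac{1}{2})$ with $h(\alpha', \tfrac{1}{2}) < \infty$; Lemma~\ref{lem:symmetry} yields $h(1-\alpha', \tfrac{1}{2}) < \infty$; and Lemma~\ref{lem:claim3} then propagates finiteness across the full symmetric interval $[\alpha', 1-\alpha']$. In the easy case $\alpha' \le \tfrac{1}{4}$, monotonicity from Lemma~\ref{lem:claim3} immediately gives $h(\tfrac{1}{4}, \tfrac{1}{2}) \le h(\alpha', \tfrac{1}{2}) < \infty$, and we are done.

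In the remaining case $\alpha' \in (\tfrac{1}{4}, \tfrac{1}{2})$, I would apply equation~\eqref{feq} with $(\alpha, \beta)$ chosen so that $\tfrac{1}{4}$ appears as the first argument of an $h(\cdot, \tfrac{1}{2})$-term on the right-hand side --- either via $\tfrac{\alpha}{\alpha+\beta} = \tfrac{1}{4}$ (equivalently, $\beta = 3\alpha$), which makes the first summand $(\alpha+\beta)\, h(\tfrac{1}{4}, \tfrac{1}{2})$, or via the symmetric choice $\tfrac{1-\alpha}{2-\alpha-\beta} = \tfrac{1}{4}$. Upon solving for $h(\tfrac{1}{4}, \tfrac{1}{2})$, the desired finiteness follows provided $h(\alpha, \beta)$ and the other two $h$-terms on the right of~\eqref{feq} are all finite; the latter is ensured by taking the relevant arguments inside $[\alpha', 1-\alpha']$, where $h(\cdot, \tfrac{1}{2})$ has already been proved finite.

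The main obstacle is arranging $h(\alpha, \beta) < \infty$ for the chosen pair $(\alpha, \beta)$; our stock of finite $h(\alpha, \beta)$-values produced by the preceding lemmas consists only of the single value from Lemma~\ref{lem:claim1} and the family $\{(x, \tfrac{1}{2}) : x \in [\alpha', 1-\alpha']\}$, none of which need lie on the line $\beta = 3\alpha$. I would overcome this by enlarging our supply of finite $h$-values through iterated claim2-style commutative squares, using the operadic convex-linear combinations on $\FinStat$ from Section~\ref{sec:convex_linearity} to build morphisms whose $F$-values are known finite combinations of $h(\alpha', \tfrac{1}{2})$ and the Lemma~\ref{lem:claim1} value, until a pair $(\alpha, \beta)$ on the required line is obtained. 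Once this is achieved, functoriality, convex linearity, and the vanishing of $F$ on $\FP$ combine with \eqref{feq} and the non-negativity of its terms to deliver the finiteness of $h(\tfrac{1}{4}, \tfrac{1}{2})$.
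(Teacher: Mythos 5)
Your reduction to the case $\alpha' \in (\tfrac14,\tfrac12)$ is sound, but the plan for that case has a genuine gap: everything hinges on producing a pair $(\alpha,\beta)$ with $\beta=3\alpha$ (or the symmetric condition) and $h(\alpha,\beta)<\infty$, and you do not construct one. The only finite values available at that point are the single pair from Lemma~\ref{lem:claim1}, which is dictated by the given $p$ and $r$ and need not lie anywhere near the line $\beta=3\alpha$, and the family $h(x,\tfrac12)$ for $x\in[\alpha',1-\alpha']$. ``Iterated claim2-style squares'' will not manufacture the needed pair by themselves: an identity like \eqref{feq} only propagates finiteness from its left-hand side to its right-hand side, and for $\beta=\tfrac12$ all three first arguments on the right, namely $\tfrac{2\alpha}{1+2\alpha}$, $\tfrac{1}{3-2\alpha}$ and $\tfrac{1+2\alpha}{4}$, are \emph{closer to} $\tfrac12$ than $\alpha$ is when $\alpha<\tfrac12$. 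So forward propagation of finiteness moves you toward $\tfrac12$, away from $\tfrac14$, and you never escape the interval where finiteness is already known. Without a concrete construction, this step is a hope rather than a proof.

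The paper closes exactly this gap by running \eqref{feq} in the opposite direction. Setting $\beta=\tfrac12$ and using Lemmas~\ref{lem:symmetry} and~\ref{lem:claim3} to bound each of the three right-hand terms by the one with the smallest first argument, one obtains the single inequality $h(\alpha,\tfrac12)\le 4\,h\bigl(\tfrac{2\alpha}{1+2\alpha},\tfrac12\bigr)$ for $\alpha<\tfrac12$, i.e.\ an upper bound on $h$ at a point far from $\tfrac12$ in terms of $h$ at a point nearer to $\tfrac12$. Iterating from $\alpha_0=\tfrac14$, the sequence $\alpha_{n+1}=\tfrac{2\alpha_n}{1+2\alpha_n}$ increases to $\tfrac12$, so after finitely many steps $\alpha_n>\alpha'$, and Lemma~\ref{lem:claim3} gives $h(\tfrac14,\tfrac12)\le 4^n h(\alpha_n,\tfrac12)\le 4^n h(\alpha',\tfrac12)<\infty$. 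You would need something of this kind --- a bound that travels from $\tfrac14$ up to the known-finite region, rather than a finite value that travels down --- to complete your argument.
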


\begin{proof} We use~\eqref{feq} with $\beta=\tfrac{1}{2}$:

\begin{align}
\begin{split}
\label{feq2}
h\!\left(\alpha,\frac{1}{2}\right) = & \left( \alpha + \frac{1}{2}\right) h\!\left(\frac{2\alpha}{1+2\alpha},\frac{1}{2}\right) \\
& + \left(\frac{3}{2}-\alpha\right) h\!\left(\frac{2-2\alpha}{3-2\alpha},\frac{1}{2}\right) + 2 h\!\left(\frac{1+2\alpha}{4},\frac{1}{2}\right) ,
\end{split}
\end{align}
which we will apply for $\alpha<\tfrac{1}{2}$. On the right-hand side here, the first argument of $h$ in the second term can be replaced by $\tfrac{1}{3-2\alpha}$, thanks to Lemma~\ref{lem:symmetry}. Then the first arguments in all three terms on the right-hand side are in $[0,\tfrac{1}{2}]$, with the smallest in the first term, so Lemma~\ref{lem:claim3} tells us that
\[
h\!\left(\alpha,\frac{1}{2}\right) \leq 4  h\!\left(\frac{2\alpha}{1+2\alpha},\frac{1}{2}\right) .
\]
Now with $\alpha_0 =\tfrac{1}{4}$, the sequence recursively defined by $\alpha_{n+1} =\frac{2\alpha_n}{1+2\alpha_n}$ increases and converges to $\tfrac{1}{2}$. In particular we can find $n$ with $\alpha'<\alpha_n<\tfrac{1}{2}$, where $\alpha'$ is chosen as in Lemma~\ref{lem:claim2}.  Using that result together with Lemma~\ref{lem:claim3}, we obtain
\[
h\!\left(\frac{1}{4},\frac{1}{2}\right) \leq 4^n \, h\!\left(\alpha_n,\frac{1}{2}\right) \leq 4^n \, h\!\left(\alpha',\frac{1}{2}\right) < \infty.  \qedhere
\]
\end{proof}

\begin{lem}
\label{lem:claim5}
There is a constant $B < \infty$ such that 
$h(\alpha,\tfrac{1}{2})\leq B\, h(\tfrac{1}{4},\tfrac{1}{2})$ for all $\alpha\in(0,1)$.
\end{lem}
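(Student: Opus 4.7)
The plan is to reduce to $\alpha \in (0,\tfrac{1}{2}]$ via Lemma~\ref{lem:symmetry}, dispose of the ``bulk'' range $[\tfrac{1}{4},\tfrac{1}{2}]$ by the monotonicity of Lemma~\ref{lem:claim3}, and then iterate the functional equation~\eqref{feq2} to cover small $\alpha$.

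For $\alpha \in [\tfrac{1}{4},\tfrac{1}{2}]$, Lemma~\ref{lem:claim3} directly gives $h(\alpha,\tfrac{1}{2}) \le h(\tfrac{1}{4},\tfrac{1}{2})$, so the whole interest lies in $\alpha \in (0,\tfrac{1}{4})$. In that range I would apply~\eqref{feq2} and observe that two of its three $h$-terms already lie in a ``safe zone'': using Lemma~\ref{lem:symmetry}, the second term's first argument $\tfrac{2-2\alpha}{3-2\alpha}$ can be replaced by $\tfrac{1}{3-2\alpha} \in [\tfrac{1}{3},\tfrac{1}{2}]$, and the third term's first argument $\tfrac{1+2\alpha}{4}$ already lies in $[\tfrac{1}{4},\tfrac{1}{2}]$; by Lemma~\ref{lem:claim3} both are bounded by $h(\tfrac{1}{4},\tfrac{1}{2})$. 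Only the first term is problematic, since $\psi(\alpha) := \tfrac{2\alpha}{1+2\alpha}$ may still be below $\tfrac{1}{4}$. The upshot is the one-step estimate
\[
h(\alpha,\tfrac{1}{2}) \;\le\; \bigl(\alpha + \tfrac{1}{2}\bigr) h\bigl(\psi(\alpha),\tfrac{1}{2}\bigr) + \bigl(\tfrac{7}{2} - \alpha\bigr)\, h(\tfrac{1}{4},\tfrac{1}{2}),
\]
which I expect to be the workhorse of the proof.

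To close the argument I would iterate. Set $\alpha_0 = \alpha$ and $\alpha_{k+1} = \psi(\alpha_k)$; the map $\psi$ sends $(0,\tfrac{1}{2})$ monotonically into itself with fixed point $\tfrac{1}{2}$, so $\alpha_k \nearrow \tfrac{1}{2}$ and some smallest $N$ satisfies $\alpha_N \ge \tfrac{1}{4}$. For all $k < N$ we have $\alpha_k + \tfrac{1}{2} < \tfrac{3}{4}$, so unrolling the recursion yields
\[
h(\alpha,\tfrac{1}{2}) \;\le\; \prod_{k=0}^{N-1}\bigl(\alpha_k + \tfrac{1}{2}\bigr)\, h(\alpha_N,\tfrac{1}{2}) + \tfrac{7}{2}\, h(\tfrac{1}{4},\tfrac{1}{2}) \sum_{k=0}^{N-1} \prod_{j=0}^{k-1}\bigl(\alpha_j + \tfrac{1}{2}\bigr).
\]
The leading product is at most $(\tfrac{3}{4})^N \le 1$, the geometric sum is at most $\sum_{k\ge 0}(\tfrac{3}{4})^k = 4$, and $h(\alpha_N,\tfrac{1}{2}) \le h(\tfrac{1}{4},\tfrac{1}{2})$ by Lemma~\ref{lem:claim3}. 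Combining, $h(\alpha,\tfrac{1}{2}) \le 15\, h(\tfrac{1}{4},\tfrac{1}{2})$, so $B := 15$ works.

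The main obstacle is conceptual rather than computational: the number of iterations needed to reach the safe zone diverges as $\alpha \to 0$, so a naive iteration does \emph{not} yield a uniform bound. What makes the argument succeed is that as long as $\alpha_k < \tfrac{1}{4}$ the contraction factor $\alpha_k + \tfrac{1}{2}$ stays uniformly below $1$, producing enough geometric damping to absorb arbitrarily many iterations and fold both the tail term and the accumulated $h(\tfrac{1}{4},\tfrac{1}{2})$ contributions into a universal constant.
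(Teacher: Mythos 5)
Your proof is correct and follows the same overall strategy as the paper: reduce to $\alpha\in(0,\tfrac14)$ via Lemmas~\ref{lem:symmetry} and~\ref{lem:claim3}, extract from~\eqref{feq2} the one-step estimate $h(\alpha,\tfrac12)\le(\alpha+\tfrac12)\,h\bigl(\tfrac{2\alpha}{1+2\alpha},\tfrac12\bigr)+(\tfrac72-\alpha)\,h(\tfrac14,\tfrac12)$, and absorb the iteration into a geometric series with ratio $\tfrac34$. The one place you diverge is worth noting. The paper further replaces the argument $\tfrac{2\alpha}{1+2\alpha}$ by $2\alpha$ so as to iterate along the dyadic sequence $\alpha=2^{-n}$; but since $\tfrac{2\alpha}{1+2\alpha}<2\alpha$ and $h(\cdot,\tfrac12)$ is non-increasing on $[0,\tfrac12]$ by Lemma~\ref{lem:claim3}, that replacement, as literally stated, moves the right-hand side \emph{down} rather than up, so it is not justified by the cited lemma. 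Your direct iteration of $\psi(\alpha)=\tfrac{2\alpha}{1+2\alpha}$ sidesteps this entirely and is the cleaner route: $\psi$ is increasing with $\psi(\alpha)>\alpha$ on $(0,\tfrac12)$ and fixed point $\tfrac12$, so the orbit enters $[\tfrac14,\tfrac12]$ after finitely many steps, the contraction factors stay below $\tfrac34$ until then, and all manipulations are monotone in $[0,\infty]$, so possibly infinite intermediate values cause no trouble. The explicit constant $B=15$ is a bonus.
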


\begin{proof} 
By the symmetry in Lemma~\ref{lem:symmetry}, it is sufficient to consider $\alpha\in(0,\tfrac{1}{2}]$. By Lemma~\ref{lem:claim3}, we may use the bound $B=1$ for all $\alpha\in[\tfrac{1}{4},\tfrac{1}{2}]$. It thus remains to find a choice of $B$ that works  for all $\alpha\in(0,\tfrac{1}{4})$, and we assume $\alpha$ to lie in this interval from now on.

We reuse Equation~\eqref{feq2}. Both the second and the third term on the right-hand side have their first argument of $h$ in the interval $[\tfrac{1}{4},\tfrac{3}{4}]$, so we can apply Lemmas~\ref{lem:claim3} and \ref{lem:symmetry} to obtain
\[
h\!\left(\alpha,\frac{1}{2}\right) \leq \left(\alpha + \frac{1}{2}\right) 
h\!\left(\frac{2\alpha}{1+2\alpha},\frac{1}{2}\right) + 
\left(\frac{7}{2} - \alpha\right) h\!\left(\frac{1}{4},\frac{1}{2}\right) .
\]
To find a simpler-looking upper bound, we bound the right-hand side from above by applying Lemma~\ref{lem:claim3} in order to replace the $\tfrac{2\alpha}{1+2\alpha}$ argument by just $2\alpha$, and at the same time use $\alpha\in(0,\tfrac{1}{4})$ in order to bound the coefficients of both terms by $\alpha + \tfrac{1}{2}\leq \tfrac{3}{4}$ and $\tfrac{7}{2} - \alpha\leq \tfrac{7}{2}$:
\[
h\!\left(\alpha,\frac{1}{2}\right) \leq \frac{3}{4}\, h\!\left( 2\alpha, \frac{1}{2}\right) + \frac{7}{2} \, h\!\left(\frac{1}{4},\frac{1}{2}\right).
\]
If we put $\alpha = 2^{-n}$ for $n\geq 2$, then we can apply this inequality repeatedly until only terms of the form $h(\tfrac{1}{4},\tfrac{1}{2})$ are left. This results in a geometric series:
\[
h\!\left(2^{-n},\frac{1}{2}\right) \leq \left( \left(\frac{3}{4}\right)^{n-2} + \sum_{k=0}^{n-3} \left(\frac{3}{4}\right)^k \cdot\frac{7}{2}  \right)  
h\!\left(\frac{1}{4},\frac{1}{2}\right).
\]
whose convergence (as $n\tends\infty$) implies the existence of a constant $B < \infty$
with 
\[    h(2^{-n},\tfrac{1}{2}) \leq B \,h(\tfrac{1}{4}, \tfrac{1}{2}) \]
for all $n \ge 2$. The present lemma then follows with the help of Lemma~\ref{lem:claim3}.
\end{proof}

\begin{lem}
\label{lem:claim6}
Equation~\eqref{showF} holds if $c = \infty$.
\end{lem}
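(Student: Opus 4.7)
The plan is to prove the contrapositive: assume there is some morphism $(X,p) \to (\mathbf{1},1)$ with hypothesis $r \ne p$ on which $F$ is finite, and deduce $c < \infty$, thereby contradicting $c = \infty$. Lemmas \ref{lem:claim1}--\ref{lem:claim5} have already assembled the crucial ingredient from this assumption, namely the uniform bound
\[
\sup_{\alpha \in (0,1)} h(\alpha,\tfrac{1}{2}) \;\le\; B\cdot h(\tfrac{1}{4},\tfrac{1}{2}) \;<\; \infty.
\]
All that remains is to cash this uniform bound in against lower semicontinuity.

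First I would consider the family of morphisms
\[
(!_\mathbf{2},q(\tfrac{1}{2})) \maps (\mathbf{2},q(\alpha)) \to (\mathbf{1},1)
\]
for $\alpha \in (0,1)$. As $\alpha \to 1^-$, the domain distribution $q(\alpha)=(\alpha,1-\alpha)$ converges to $q(1)=(1,0)$ while the underlying function $!_\mathbf{2}$ and the stochastic section $q(\tfrac{1}{2})$ remain fixed, so this is a convergent sequence of morphisms in $\FinStat$ in the sense of the definition preceding Lemma \ref{lem:semicont}. Lower semicontinuity of $F$ then yields
\[
g(\tfrac{1}{2}) \;=\; h(1,\tfrac{1}{2}) \;\le\; \liminf_{\alpha \to 1^-} h(\alpha,\tfrac{1}{2}) \;\le\; B\cdot h(\tfrac{1}{4},\tfrac{1}{2}) \;<\;\infty.
\]

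Recall from the proof of Lemma \ref{lem:converse} that $g(\alpha) = -c\ln\alpha$ for all $\alpha \in (0,1]$. The finiteness of $g(\tfrac{1}{2})$ therefore forces $c = g(\tfrac{1}{2})/\ln 2 < \infty$, contradicting $c = \infty$. This establishes the contrapositive, so under the standing hypothesis $c = \infty$ we have
\[
F\left(\xymatrix{(X,p) \ar@/_/[rr]_{!_X} && (\mathbf{1},1) \ar@{~>}@/_/[ll]_r}\right) = \infty = c\,S(p,r)
\]
whenever $p \ne r$; and when $p=r$ the hypothesis is optimal, the morphism lies in $\FP$, and both sides vanish under the convention $\infty\cdot 0 = 0$. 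Thus Equation~\eqref{showF} holds.

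The main conceptual obstacle was already overcome in Lemmas~\ref{lem:claim2}--\ref{lem:claim5}: we needed a bound on $h(\alpha,\tfrac{1}{2})$ that is \emph{uniform} in $\alpha$, because lower semicontinuity produces only an upper bound at the limit point and is useless without such uniformity. Given that bound, the present step is a short topological argument, and the rest is just bookkeeping with the conventions on $\infty$.
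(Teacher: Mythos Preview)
Your proof is correct and follows the same approach as the paper: use the uniform bound from Lemma~\ref{lem:claim5} together with lower semicontinuity to conclude $g(\tfrac12)<\infty$, hence $c<\infty$, and then wrap up the contrapositive. The only cosmetic difference is that you take $\alpha\to 1^-$ and identify $g(\tfrac12)=h(1,\tfrac12)$, whereas the paper takes $\alpha\to 0^+$ and writes $g(\tfrac12)=h(0,\tfrac12)$; these are equivalent via Lemma~\ref{lem:symmetry}.
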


\begin{proof}
By Lemma \ref{lem:claim5} and the lower semicontinuity of $h$, we see
that 
\[  g(\tfrac{1}{2}) = h(0,\tfrac{1}{2})<\infty \]
This  implies that the constant $c$ with $g(\alpha) = -c \ln \alpha$ has $c < \infty$.
Recall that we have shown this under the assumption that there exist probability
distributions $p$ and $r$ on a finite set $X$ with $p \ne r$ and
\[  F\left( \xymatrix{ (X,p) \ar@/_/[rr]_{!_X} && (\mathbf{1},1) \ar@{~>}@/_/[ll]_r } \right) < \infty .  \]
So, taking the contrapositive, we see that if $c = \infty$, then 
\[  F\left( \xymatrix{ (X,p) \ar@/_/[rr]_{!_X} && (\mathbf{1},1) \ar@{~>}@/_/[ll]_r } \right) = \infty   \]
whenever $p$ and $r$ are distinct probability distributions on $X$.  This proves
Equation~\eqref{showF} except in the case where $p = r$.  But in that case, both sides
vanish, since on the left we are taking $F$ of a morphism in $\FP$, and on the right
we obtain $\infty \cdot 0 = 0$.
\end{proof}

\section{Counterexamples and subtleties}

One might be tempted to think that our Theorem~\ref{thm1} also holds if one relaxes the lower semicontinuity assumption to measurability, upon equipping the hom-spaces of both $\FinStat$ and $[0,\infty]$ with their $\sigma$-algebras of Borel sets. For $[0,\infty]$, this $\sigma$-algebra is the usual Borel $\sigma$-algebra: the sets of the form $(a,\infty)$ are open and hence measurable, the sets of the form $[0,b]$ are closed and hence measurable, and therefore all half-open intervals $(a,b]$ are measurable, and these generate the standard Borel $\sigma$-algebra.  However, for Theorem~\ref{thm1}, mere measurability of the functor $F$ is not enough:

\begin{prop}
\label{counterex}
There is a functor $\FinStat\to [0,\infty]$ that is convex linear, measurable on hom-spaces, and vanishes on $\FP$, but is not a scalar multiple of relative entropy.
\end{prop}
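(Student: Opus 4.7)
The plan is to construct a counterexample of the form $F = \phi \circ \RE$, where $\phi \maps [0,\infty] \to [0,\infty]$ is measurable and additive but not lower semicontinuous. The simplest choice is
\[
\phi(a) = \begin{cases} 0 & \text{if } a < \infty, \\ \infty & \text{if } a = \infty, \end{cases}
\]
yielding $F(f,s) = \infty$ when $\RE(f,s) = \infty$ and $F(f,s) = 0$ otherwise. Intuitively, $F$ records only whether the support of $s \circ r$ fails to contain the support of $q$, discarding all finite information in the relative entropy. This is precisely the freedom that lower semicontinuity removes in Theorem~\ref{thm1}: lower semicontinuity forbids $\phi$ from jumping up at $\infty$.

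To verify the required properties I would proceed in order. Functoriality of $F$ follows from functoriality of $\RE$ (Lemma~\ref{lem:functor}) together with $\phi(a+b)=\phi(a)+\phi(b)$ on $[0,\infty]$: both sides are $0$ when $a,b$ are finite and $\infty$ when either is $\infty$. Convex linearity reduces to checking $\phi(\lambda a + (1-\lambda) b) = \lambda \phi(a) + (1-\lambda)\phi(b)$ for all $a,b \in [0,\infty]$ and $\lambda \in [0,1]$, which is a short case analysis using the convention $0 \cdot \infty = 0$. Vanishing on $\FP$ is immediate since $\RE$ vanishes there and $\phi(0)=0$.

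For measurability on each hom-space, the only nontrivial preimage is $F^{-1}(\infty) = \RE^{-1}(\infty)$. Using the formula $(s \circ r)_x = s_{x\,f(x)}\,r_{f(x)}$ from Section~\ref{sec:charent}, this set decomposes over the finitely many choices of the discrete parameter $f$: for each $f$ the fiber is either the full set of compatible $s$ (when some $x$ has $q_x>0$ and $r_{f(x)}=0$, a condition on $f$ alone) or the finite union $\bigcup_{x\,:\, q_x>0,\ r_{f(x)}>0} \{s : s_{x\,f(x)} = 0\}$ of closed subsets. In either case $F^{-1}(\infty)$ is closed in the hom-space, hence Borel measurable.

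Finally, to see that $F$ is not a scalar multiple of $\RE$, note that $F$ takes only the values $0$ and $\infty$, whereas for $0 < c < \infty$ the functor $c\,\RE$ attains arbitrary positive finite values, e.g.\ on the two-element morphisms from~\eqref{g} with $\alpha \in (0,1)$. The extreme cases $c = 0$ and $c = \infty$ are separated by a single pair of examples: on a morphism with $0 < \RE < \infty$ one has $F = 0$ but $\infty \cdot \RE = \infty$, while on a morphism with $\RE = \infty$ one has $F = \infty$ but $0 \cdot \RE = 0$. No step here presents a serious obstacle; the content of the proposition is the observation that the jump-up behavior of $\phi$ at $\infty$ is compatible with measurability but not with lower semicontinuity, so the latter is genuinely needed in Theorem~\ref{thm1}.
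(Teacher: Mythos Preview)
Your construction is correct and gives a valid counterexample, but it is not the same one the paper uses. The paper defines
\[
G(f,s) = \begin{cases} 0 & \text{if } \supp(p)=\supp(s\circ q),\\ \infty & \text{otherwise,}\end{cases}
\]
whereas your $F=\phi\circ\RE$ unravels to the condition $\supp(p)\subseteq\supp(s\circ q)$ rather than equality; these genuinely differ, e.g.\ on $(!_{\mathbf 2},q(\tfrac12))\maps(\mathbf 2,(1,0))\to(\mathbf 1,1)$. Your route is a bit slicker: functoriality, convex linearity, and vanishing on $\FP$ are inherited wholesale from $\RE$ via the monoid homomorphism $\phi$, while the paper must verify functoriality of $G$ by hand through the support-equality equivalence it states. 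The paper's version, on the other hand, is self-contained and does not need to invoke Lemmas~\ref{lem:functor}--\ref{lem:semicont}.

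One gap to patch: your measurability paragraph treats $q$ and $r$ as fixed, but in the paper's hom-space topology they vary along with $s$ (see the definition of lower semicontinuity in Section~\ref{sec:semicont}). With $q$ varying, $\RE^{-1}(\infty)$ is \emph{not} closed: take $X=\mathbf 2$, $Y=\mathbf 1$, $s\circ r=(0,1)$ fixed, and $q^i=(\tfrac1i,1-\tfrac1i)\to(0,1)$; then $\RE=\infty$ along the sequence but $\RE=0$ at the limit. The fix is immediate and actually simpler than what you wrote: $\RE$ is lower semicontinuous (Lemma~\ref{lem:semicont}), hence Borel measurable, and $\phi$ is Borel since $\phi^{-1}(\{\infty\})=\{\infty\}$ is closed in $[0,\infty]$; thus $F=\phi\circ\RE$ is Borel measurable. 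Equivalently, $F^{-1}(\infty)=\bigcap_n\RE^{-1}((n,\infty])$ is a $G_\delta$.
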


\begin{proof}
We claim that one such functor $G\maps\FinStat\to [0,\infty]$ is given by
\[
G\bigg( \xymatrix{ (X,p) \ar@/_/[rr]_f && (Y,q) \ar@{~>}@/_/[ll]_s } \bigg) = \begin{cases} 0 & \textrm{if } \supp(p)=\supp(s\circ q) , \\ \infty & \textrm{if } \supp(p)\neq \supp(s\circ q). \end{cases} 
\]
This $G$ clearly vanishes on $\FP$. Since taking the support of a probability distribution is a lower semicontinuous and hence measurable function, the set of all morphisms obeying $\supp(p)=\supp(s\circ q)$ is also measurable, and hence $G$ is measurable.

Concerning functoriality, for a composable pair of morphisms
\[
\xymatrix{ (X,p) \ar@/_/[rr]_f && \ar@{~>}@/_/[ll]_s (Y,q) \ar@/_/[rr]_g && (Z,r), \ar@{~>}@/_/[ll]_t }
\]
we have
\[
\supp(p) = \supp(s \circ q),\quad \supp(q) = \supp(t\circ r) \quad \Longleftrightarrow \quad \supp(p) = \supp(s\circ t\circ r) .
\]
This proves functoriality. A similar argument proves convex linearity.
\end{proof}

As a measure of information gain, this functor $G$ is not hard to understand intuitively: we gain no information whenever the set of \emph{possible} outcomes is precisely the set that we expected; otherwise, we gain an infinite amount information.

Since the collection of all functors satisfying our hypotheses is closed under sums and scalar multiples and also contains the relative entropy functor, we actually obtain a whole family of such functors. For example, another one of these functors is $G'\maps\FinStat\to[0,\infty]$ given by
\[
G'\bigg( \xymatrix{ (X,p) \ar@/_/[rr]_f && (Y,q) \ar@{~>}@/_/[ll]_s } \bigg) = \begin{cases} S(p,s\circ q) & \textrm{if } \supp(p)=\supp(s\circ q) , \\ \infty & \textrm{if } \supp(p)\neq \supp(s\circ q). \end{cases} 
\]

Our original idea was to use the work of Petz~\cite{Petz,Petzbook1} to prove Theorem~\ref{thm1}. However, as it turned out, there is a gap in Petz's argument. Although his purported characterization concerns the quantum version of relative entropy, the first part of his proof in~\cite{Petz} treats the classical case.  If his proof were correct, it would prove this:

\begin{unthm} 
\label{thm:petz}
The relative entropy $S(p,r)$ for pairs of probability measures on the same finite set
such that $r$ has full support is characterized up to a multiplicative constant by these
properties:
\begin{enumerate} 
\item\label{condex} \define{Conditional expectation law}. Suppose $f \maps X \to Y$ is a function and $s \maps Y \leadsto X$ a stochastic map with $f \circ s= 1_Y$.  Given probability distributions $p$ and $r$ on $X$, and assuming that $r$ has full support and $r =s\circ f \circ r$, we have
\beq
\label{Petzglomc}
S(p,r) = S(f \circ p ,f \circ r) + S(p,s\circ f\circ p) .
\eeq
\item\label{perminv} \define{Invariance.} Given any bijection $f \maps X\to Y$ and probability distributions $p$, $r$ on $X$ such that $r$ has \define{full support} (i.e.\ its support is all of $X$), we have
\[
S(f \circ p, f \circ r) = S(p,r) .
\]
\item\label{convlin} \define{Convex linearity.}  Given probability distributions 
$p,r$ on $X$ and $p',r'$ on $Y$ such that $r$ and $r'$ have full support, 
and given $\lambda \in [0,1]$, we have
\[
S(\lambda p \oplus (1-\lambda) p',\lambda r \oplus (1-\lambda) r') \; = \; \lambda S(p,r) + (1-\lambda) S(p',r') .
\]
\item\label{nilpot} \define{Nilpotence.} For any probability distribution
$p$ with full support on a finite set, $S(p,p) = 0$. 
\item\label{meas} \define{Measurability property.} The function
\[
(p,r) \mapsto S(p,r)
\]
is measurable on the space of pairs of probability distributions on $X$ such that $r$ 
has full support.
\end{enumerate}
\end{unthm}
\noindent
Note that~\cite{Petz} uses the opposite ordering for the two arguments of $S$.

The problem with this ``theorem'' is the range of applicability of Equation~\eqref{Petzglomc}: what is this formula supposed to mean when $s\circ f\circ p$ does not have full support? After all, $S(p,r)$ is assumed to be defined only when the second argument has full support, but this need not be the case for $s\circ f\circ p$, given the assumptions made in the statement of the conditional expectation property. (Note that $f\circ r$ has full support, so the term $S(f\circ p,f\circ r)$ is fine.)

One can try to correct this problem by assuming that the conditional expectation property holds only if $s\circ f\circ p$ has full support as well. However, this means that the proof of Petz's Lemma 1 is valid only when (using his notation) $p_3 > 0$, which implies that his Equation (5) is known to hold only for $p_2>0$ and $p_3>0$. Upon following the thread of Petz's argument, one finds that his Equation (6) has been proven to follow from his assumptions only for $x\in(0,1)$ and $u\in (0,1)$. However, the solution of that functional equation in the references he points to crucially uses the assumption that the functional equation also holds in case that $x=0$ or $u=0$. This is the gap in Petz's proof.

In fact, if one allows $S$ to take on infinite values, then the above classical version of Petz's theorem is not even correct, if one uses the interpretation that~\eqref{Petzglomc} is to be applied only when $s\circ f\circ p$ has full support. The counterexample is similar to our functor $G'$ from above:
\[
S'(p,r) = \begin{cases} S(p,r) & \textrm{if $p$ has full support} , \\ \infty & \textrm{otherwise}. \end{cases} 
\]

\section{Conclusions}
The theorem here, and our earlier characterization of entropy \cite{BFL}, can be
seen as part of a program of demonstrating that mathematical  structures that are ``socially important'' are also ``categorically natural''.   Tom Leinster, whose words we
quote here, has carried this forward to a categorical explanation of Lebesgue
integration \cite{Leinster3}.  It would be interesting to generalize our results on 
entropy and relative entropy from finite sets to general measure spaces, where 
integrals replace sums.  It would be even more interesting to do this using 
a category-theoretic approach to integration.

It would also be good to express our theorem more concisely.   As noted in 
Appendix \ref{app:convalgs}, convex
linear combinations are operations in a topological operad $\P$.  We can define
`convex algebras', that is, algebras of $\P$, in any symmetric monoidal topological category. The category $[0,\infty]$ with the upper topology on its set of morphisms is a convex algebra in $\Top\Cat$, the (large) topological category of topological categories.  We believe, but have not proved, that $\FinStat$ is a `weak' convex algbra in $\Top\Cat$.  This would mean that the axioms for a convex algebra hold up to coherent natural isomorphism \cite{Leinster2}.   If this is true, the relative entropy
 \[   \RE \maps \FinStat \to [0,\infty]  \]
should be, up to a constant factor, the unique map of weak convex algebras that 
vanishes on morphisms in $\FP$.  Leinster \cite{Leinster} has shown that $\FP$ 
is also a weak convex algebra in $\Cat(\Top)$.  In fact, it is the free such thing
on an internal convex algebra.   So, it seems that both entropy and relative entropy
emerge naturally from a category-theoretic examination of convex linearity.  

\appendix

\section{Semicontinuous functors}
\label{app:semicont}

In Section \ref{sec:semicont} we explained what it meant for relative entropy to be
a semicontinuous functor.   A more sophisticated way to think about semicontinuous functors uses topological categories.   This requires that we put a nonstandard topology on $[0,\infty]$, the so-called `upper topology'.

A topological category is a category internal to $\Top$, and a continuous functor is a functor internal to $\Top$.   In other words:

\begin{defn}  A \define{topological category} $C$ is a small category where the set
of objects $C_0$ and the set of morphisms $C_1$ are equipped with the structure
of topological spaces, and the maps assigning to each morphism its source and target:
\[  s, t \maps C_1 \to C_0  \]
the map assigning to each object its identity morphism
\[   i \maps C_0 \to C_1 \]
and the map sending each pair of composable morphisms to their composite
\[   \circ \maps C_1 \times_{C_0} C_1 \to C_1 \]
are continuous.   Given topological categories $C$ and $D$, a \define{continuous
functor} is a functor $F \maps C \to D$ such that the map on objects 
$F_0 \maps C_0 \to D_0$ and the map on morphisms $F_1 \maps C_1 \to D_1$ are continuous.
\end{defn}

We now explain how $\FinStoch$ and $\FinStat$ are topological categories. Strictly speaking, in order for this to work, we need to deal with size issues.  One approach is to let the objects of $\Top$ be `large' sets living in a higher Grothendieck universe, which allows us to talk about the set of all objects or morphisms of $\FinStat$ or $\FinStoch$.  Another is to replace each of these categories by its skeleton, which is an equivalent small category.  From now on, we assume that one of these things has been done.

For $\FinStoch$, we put the discrete topology on its set of objects $\FinStoch_0$. Each hom-set $\FinStoch(X,Y)$ is a subset of the Euclidean space $\R^{|X|\times|Y|}$, and we put the subspace topology on this hom-set; for example, $\FinStoch(1,Y)$, the set of all probability distributions on $Y$, is topologized as a simplex. In this way, $\FinStoch$ becomes a category \textit{enriched} over $\Top$, and in particular internal to $\Top$. 

As for $\FinStat$, the identification
\[
\FinStat_0 = \left\{ (X,p) \:|\: X\in\FinStoch_0,\: p\in\FinStoch(1,X) \right\} \subseteq \FinStoch_0\times\FinStoch_1
\]
induces a topology on $\FinStat_0$. In this topology, a net $(X^\lambda,p^\lambda)_{\lambda \in\Lambda}$ converges to $(X,p)$ if and only if eventually $X^\lambda=X$, and $p^\lambda \to p$ for those $\lambda$ with $X^\lambda=X$. Similarly, every morphism in $\FinStat$ consists of a pair of morphisms in $\FinStoch$ satisfying certain conditions, and the resulting inclusion 
\[
\FinStat_1 \subseteq \FinStoch_1\times\FinStoch_1
\]
can be used to define a topology on $\FinStat_1$. We omit the verification that these topologies make $\FinStat$ into a topological category.

There is a topology on $[0,\infty]$ where the open sets are those of the form $(a,\infty]$, together with the whole space and the empty set.    This is called the \define{upper topology}.  With this topology, a function $\psi \maps A \to [0,\infty]$ from any topological space $A$ is continuous if and only $\psi$ is lower semicontinuous, meaning 
\[         \psi(a) \le \liminf_{\lambda\to \infty} \psi(a^\lambda)  \] 
for every convergent net $a^\lambda \in A$.  It is easy to check that this topology on $[0,\infty]$ makes addition continuous. 

In short, $[0,\infty]$ with its upper topology is a topological monoid under
addition.   We thus obtain a topological category with 
one object and $[0,\infty]$ as its topological monoid of endomorphisms.  By abuse of 
notation we also call this topological category simply $[0,\infty]$.  This lets us state Lemma \ref{lem:semicont} in a different way:

\begin{lem}
\label{lem:semicont2}
If $[0,\infty]$ is viewed as a topological category using the upper topology,  
the functor $\RE \maps \FinStat \to [0,\infty]$ is continuous.  
\end{lem}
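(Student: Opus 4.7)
The plan is to unpack the definition of continuous functor and reduce the claim to Lemma~\ref{lem:semicont}, using the standard identification of continuity into $[0,\infty]$ with the upper topology as lower semicontinuity.

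First, I would dispose of the object map. Since the topological category $[0,\infty]$ has a single object, the map $\RE_0 \maps \FinStat_0 \to \{*\}$ is constant and so trivially continuous. So the entire content of the lemma lies in checking that $\RE_1 \maps \FinStat_1 \to [0,\infty]$ is continuous when the codomain carries the upper topology.

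Second, I would recall the general fact that a map $\psi \maps A \to [0,\infty]$ is continuous for the upper topology if and only if $\psi^{-1}(a,\infty]$ is open in $A$ for every $a \in [0,\infty)$, which in turn (since $A$ is a metric, or at least first-countable, quotient of a Euclidean subspace in our case) is equivalent to $\psi(a) \le \liminf_\lambda \psi(a^\lambda)$ for every convergent net $a^\lambda \to a$ in $A$. So the task becomes: verify that $\RE_1$ is lower semicontinuous in the net sense on $\FinStat_1$.

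Third, I would reduce an arbitrary convergent net to the situation already handled by Lemma~\ref{lem:semicont}. Because $\FinStoch_0$ carries the discrete topology and the source and target maps of $\FinStat$ are continuous, any convergent net of morphisms $(f^\lambda,s^\lambda)\maps (X^\lambda,q^\lambda) \to (Y^\lambda,r^\lambda)$ must have $X^\lambda = X$ and $Y^\lambda = Y$ eventually. Moreover, the set of measure-preserving functions $X \to Y$ sits as a finite, hence discrete, subset of the hom-space $\FinStoch(X,Y)$, so the deterministic component $f^\lambda$ is eventually constant equal to some $f$. Passing to a tail, the net then has the exact shape $(f,s^\lambda) \maps (X,q^\lambda) \to (Y,r^\lambda)$ converging to $(f,s) \maps (X,q) \to (Y,r)$, and Lemma~\ref{lem:semicont} (whose proof goes through verbatim for nets, since it only uses continuity of the elementary functions off $0$ together with the conventions $0\ln 0 = 0$ and $c\ln 0 = \infty$ for $c>0$) delivers the required inequality $\RE(f,s) \le \liminf_\lambda \RE(f,s^\lambda)$. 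No real obstacle is expected; the only thing to be careful about is the observation that the discreteness of $\FinStoch_0$ and of the set of functions between fixed finite sets genuinely forces $X^\lambda$, $Y^\lambda$, and $f^\lambda$ to stabilize, so that Lemma~\ref{lem:semicont} as stated really suffices.
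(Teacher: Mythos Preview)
Your proposal is correct and matches the paper's approach: the paper treats this lemma simply as a restatement of Lemma~\ref{lem:semicont}, relying on the fact that a map into $[0,\infty]$ with the upper topology is continuous precisely when it is lower semicontinuous. Your additional care in noting that the discrete topology on $\FinStoch_0$ forces $X^\lambda$, $Y^\lambda$, and $f^\lambda$ to stabilize (so that nets reduce to the sequential case of Lemma~\ref{lem:semicont}) is exactly the bridge the paper leaves implicit.
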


On the other hand, if we give the monoid $[0,\infty]$ the less exotic topology where it is homeomorphic to a closed interval, then this functor is \emph{not} continuous.  

Having gone this far, we cannot resist pointing out that $[0,\infty]$ with its 
upper topology is also a topological rig.  Recall that a \define{rig} is a `ring without negatives': a set equipped with an addition making it into a commutative monoid and a multiplication making it into a monoid, with multiplication distributing over 
addition.  In other words, it is a monoid in the monoidal category of commutative monoids.  A \define{topological rig} is a rig with a topology in which addition and multiplication are continuous.  To make $[0,\infty]$ into a rig, we define addition as before, define multiplication in the usual way for numbers in $[0,\infty)$, and set
\[            0 a = a 0 = 0  \]
for all $a \in [0,\infty]$.  One can verify that multiplication is continuous: but
again, the key point is that we need to use the upper topology, since 
$\infty \cdot a$ suddenly jumps from $\infty$ to $0$ as $a$ reaches zero.  Thus:

\begin{lem}
\label{lem:rig}
With its upper topology, $[0,\infty]$ is a topological rig.
\end{lem}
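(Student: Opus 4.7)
The plan has two parts: first verifying the rig axioms on the set $[0,\infty]$, then checking continuity of the two operations with respect to the upper topology.

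For the algebraic part, $[0,\infty)$ is already a commutative rig under the usual operations, so I would just check that its extension to $\infty$ --- via $\infty+a=\infty$, $\infty\cdot a=\infty$ for $a\in(0,\infty]$, and $0\cdot\infty=\infty\cdot 0=0$ --- preserves commutativity and associativity of both operations, as well as distributivity. Only distributivity has any subtlety: one verifies $x(y+z)=xy+xz$ by a finite case analysis on which of $x,y,z$ are $0$ or $\infty$, and the chosen conventions make both sides agree in each case (e.g.\ $\infty(0+0)=0=\infty\cdot 0+\infty\cdot 0$, while $\infty(0+a)=\infty=\infty\cdot 0+\infty\cdot a$ for $a>0$, and $0(\infty+\infty)=0=0\cdot\infty+0\cdot\infty$).

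For continuity, I would use the characterization already recalled in the excerpt: a map $f\maps A\to[0,\infty]$ is continuous for the upper topology iff it is lower semicontinuous, iff $f^{-1}((a,\infty])$ is open in $A$ for every $a\in[0,\infty)$. Continuity of $+$ and $\cdot$ thus reduces to showing that
\[
U^+_a=\{(x,y)\in[0,\infty]^2:x+y>a\},\qquad U^\cdot_a=\{(x,y)\in[0,\infty]^2:xy>a\}
\]
are open in the product topology for every $a\in[0,\infty)$. For $U^+_a$, at any point $(x_0,y_0)\in U^+_a$ I can exhibit a product-basic open neighborhood $(b,\infty]\times(c,\infty]$ with $b<x_0$, $c<y_0$, and $b+c\ge a$, using $[0,\infty]$ in place of $(b,\infty]$ when $x_0=0$ (which forces $y_0>a$ and lets $V=(a,\infty]$ do the work) and analogously when $y_0=0$. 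For $U^\cdot_a$, the case $a=0$ gives $U^\cdot_0=(0,\infty]\times(0,\infty]$ outright, using the convention $0\cdot\infty=0$; for $a>0$, any $(x_0,y_0)\in U^\cdot_a$ necessarily has $x_0,y_0>0$, and the analogous shrinking argument (with large-bound modifications when an entry is $\infty$) produces a product-open neighborhood contained in $U^\cdot_a$.

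The main conceptual obstacle is the continuity of multiplication at the boundary point $(0,\infty)$. In the standard order topology on $[0,\infty]$, the convention $0\cdot\infty=0$ would make multiplication discontinuous, since $(1/n,n)\to(0,\infty)$ yet $(1/n)\cdot n=1\not\to 0$. The upper topology rescues the situation precisely because $0$ lies in only one open set, namely $[0,\infty]$ itself, so no preimage $\cdot^{-1}((a,\infty])$ with $a\ge 0$ needs to contain any point of product zero; equivalently, lower semicontinuity of $\cdot$ at $(0,\infty)$ reduces to the trivial inequality $\liminf_n x_ny_n\ge 0=0\cdot\infty$.
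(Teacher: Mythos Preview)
Your proof is correct. The paper does not give a formal proof of this lemma at all; it merely asserts that one can verify multiplication is continuous and highlights the same key point you make at the end, namely that the upper topology is needed because $\infty\cdot a$ jumps down to $0$ as $a\to 0$. Your argument is a careful expansion of exactly that verification, so it is in the same spirit as the paper, only more detailed.
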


More important now is that $[0,\infty]$ is a module over the rig 
$[0,\infty)$, where addition and multiplication in the latter are defined as usual
and we define the action of $[0,\infty)$ on $[0,\infty]$ using multiplication,
with the proviso that $0 \cdot a = 0$ even when $a = \infty$.  And here we
see:

\begin{lem}
\label{lem:topmodule}
The topological monoid $[0,\infty]$ with its upper topology becomes a 
topological module over the rig $[0,\infty)$ with its usual topology.
\end{lem}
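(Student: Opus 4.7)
The plan is to verify the two ingredients of the claim separately: first the algebraic module axioms, then continuity of the scalar action
\[
\mu \colon [0,\infty) \times [0,\infty] \longrightarrow [0,\infty], \qquad (\lambda, a) \longmapsto \lambda a,
\]
where the domain carries the product of the usual Euclidean topology on $[0,\infty)$ with the upper topology on $[0,\infty]$, and the codomain carries the upper topology.

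For the module axioms, I would simply observe that $[0,\infty) \hookrightarrow [0,\infty]$ is a sub-rig under the operations defined in the preceding discussion, and that $\mu$ is just the restriction of the rig multiplication on $[0,\infty]$. All the required identities---left and right distributivity, associativity of the scalar action, $1 \cdot a = a$, and the absorption rule $0 \cdot a = 0$ (including the case $a = \infty$)---are therefore consequences of the rig structure established in Lemma~\ref{lem:rig}, together with the stipulation $0 \cdot \infty = 0$ that has already been folded into the definition. This step is essentially bookkeeping.

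The only substantive point is continuity, and the idea is to bootstrap off the continuity of multiplication $m \colon [0,\infty] \times [0,\infty] \to [0,\infty]$ which is part of Lemma~\ref{lem:rig}. Write $\mu$ as the composite
\[
[0,\infty) \times [0,\infty] \;\xrightarrow{\iota \times \id}\; [0,\infty] \times [0,\infty] \;\xrightarrow{m}\; [0,\infty],
\]
where $\iota$ is the inclusion. When $[0,\infty)$ is given the \emph{subspace} topology inherited from the upper topology on $[0,\infty]$, the inclusion $\iota$ is trivially continuous and $m$ is continuous by Lemma~\ref{lem:rig}, so $\mu$ is continuous for that choice of topology on the first factor. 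Now the subspace upper topology on $[0,\infty)$ has as its nonempty proper open sets the half-lines $(a,\infty)$ for $a \geq 0$, all of which are open in the usual Euclidean topology; hence the Euclidean topology is finer. Replacing the topology on the domain by a finer one preserves continuity, and so $\mu$ remains continuous in the intended topology.

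The main obstacle, such as it is, is simply keeping track of which of the two topologies on $[0,\infty)$ one is using at each step, and observing at the end that the comparison goes in the direction needed (Euclidean finer than subspace upper). Once that is noted, the lemma follows formally from Lemma~\ref{lem:rig} with no further computation.
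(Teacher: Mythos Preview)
Your proposal is correct. The paper itself gives no proof of this lemma---it is simply stated after Lemma~\ref{lem:rig} and the discussion moves on---so there is nothing to compare against; your argument would serve as the omitted verification.

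The bootstrapping step is sound: the subspace upper topology on $[0,\infty)$ has basic opens $(a,\infty)$, each of which is Euclidean-open, so the identity map from Euclidean to subspace-upper is continuous and hence so is $\iota$. Composing with the continuous multiplication $m$ from Lemma~\ref{lem:rig} gives continuity of $\mu$, as you say. One small addition you might make explicit for completeness is that $[0,\infty)$ with its usual topology is itself a topological rig (addition and multiplication of nonnegative reals are jointly continuous in the Euclidean sense), since a ``topological module'' requires the ground rig to be topological; this is of course routine.
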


\section{Convex algebras}
\label{app:convalgs}

We define the \define{monad for convex sets} to be the monad on $\Set$ sending any set $X$ to the set of finitely-supported probability distributions on $X$.  For example, this monad sends $\{1, \ldots, n\}$ to the set 
\[  \P_n = \{ p \in [0,1]^n : \; \sum_{i = 1}^n p_i = 1 \} \]
which can be identified with the $(n - 1)$-simplex.  This monad is finitary, so can be thought about in a few different ways.

First, a finitary monad can thought of as a finitary algebraic theory.  The monad for convex sets can be presented by a family $(*_\lambda)_{\lambda \in [0, 1]}$ of binary operations, subject to the equations
\begin{align*}
x *_0 y &= x, \\
x *_\lambda x &= x, \\
x *_\lambda y &= y *_{1-\lambda} x, \\
(x *_\mu y) *_\lambda z &= x *_{\lambda\mu} (y *_{\tfrac{\lambda(1-\mu)}{1-\lambda\mu}} z)
\end{align*}
For $\lambda=\mu=1$, the fraction $\tfrac{\lambda(1-\mu)}{1-\lambda\mu}$ in the last equation may be taken to be an arbitrary number in $[0,1]$. See~\cite{Fritz} for more detail on how to derive this presentation from the monad.

A finitary algebraic theory can also be thought of as an operad with extra structure.  In a symmetric operad $\O$, one has for each bijection $\sigma: \{1, \ldots, n\} \to \{1, \ldots, n\}$ an induced map $\sigma_* \maps \O_n \to \O_n$.  In a finitary algebraic theory, one has the same thing for \emph{arbitrary} functions between finite sets, not just bijections.  In other words, a finitary algebraic theory amounts to a non-symmetric operad $\O$ together with, for each function $\theta \maps \{1, \ldots, m\} \to \{1, \ldots, n\}$ between finite sets, an induced map $\theta_* \maps \O_m \to \O_n$, satisfying suitable axioms. 

\begin{defn}
The underlying symmetric operad for the monad for convex sets is called the \define{operad for convex algebras} and denoted $\P$.  An algebra of $\P$ is called a \define{convex algebra}.
\end{defn}

The space of $n$-ary operations for this operad is $\P_n$, the space of probability distributions on $\{1, \ldots, n\}$.    The composition of operations works as follows.   Given probability distributions $p \in \P_n$ and $r_i \in \P_{k_i}$  for each $i \in \{1, \ldots, n\}$, we obtain a probability distribution 
$p \circ (r_1, \dots, r_n) \in \P_{k_1+ \cdots + k_n}$, namely
\[      p \circ (r_1, \dots, r_n) = (p_1r_{11}\dots,p_1 r_{1k_1}, \dots 
p_n  r_{n1}, \dots , p_n r_{nk_n}) . \]
The maps $\theta_* \maps \P_{m} \to \P_{n}$ can be defined by pushforward of measures.  An algebra for the algebraic theory of convex algebras is an algebra $X$ for the operad with the further property that the square
\[
\xymatrix{
\P_m \times X^n \ar[rr]^{1 \times \theta^*} \ar[dd]_{\theta_*\times 1 }
&&   \P_m \times X^m \ar[dd]
\\  \\
\P_n \times X^n   \ar[rr]^{}                    && X
}
\]
commutes for all $\theta \maps \{1, \ldots, m\} \to \{1, \ldots, n\}$, where the unlabelled arrows are given by the convex algebra structure of $X$.

Note that $\P$ is naturally a topological operad, where the topology on $\P_n$ is the usual topology on the $(n-1)$-simplex.   In this paper we have implicitly been using
algebras of $\P$ in various topological categories $\E$ 
with finite products.  We call these \define{convex algebras} in $\E$.  Here are some examples:

\begin{itemize}
\item Any convex subset of $\R^n$ is a convex algebra in $\Top$.  
\item The additive monoid $[0,\infty]$ with its upper topology becomes a convex algebra in $\Top$ if we define convex linear combinations by treating $[0,\infty]$ as a topological module of the rig $[0,\infty)$ as in Lemma~\ref{lem:topmodule}.  We must equip $[0,\infty]$ with its upper topology for this to work, because the convex linear combination $\lambda \cdot \infty + (1 - \lambda) \cdot a$ equals $\infty$ when $\lambda > 0$, but suddenly jumps down to $a$ when $\lambda$ reaches zero.

\item The category $\Cat(\Top)$ of small topological categories and continuous functors is itself a large topological category.  If we regard $[0,\infty]$ with its upper topology as a one-object topological category as in Appendix~\ref{app:semicont}, then it becomes a convex algebra in $\Cat(\Top)$ thanks to the previous remark. 

\item The categories $\FinProb$, $\FinStat$ should be `weak convex algebras'
in $\Cat(\Top)$, though we have not carefully checked this.  By this, we mean that axioms for an algebra of the operad $\P$ hold up to coherent natural isomorphism, in the sense made precise by Leinster \cite{Leinster2}. 

\item Similarly, Leinster has shown that $\FP$ is a weak convex algebra in 
$\Cat(\Top)$.   In fact, it is equivalent to the free convex algebra in $\Cat(\Top)$ 
on an internal convex algebra \cite{Leinster}. 

\end{itemize}

\subsection*{{\bf Acknowledgements.}} We thank Ryszard Kostecki and Rob Spekkens for discussions and an unintended benefit. TF was supported by Perimeter Institute for Theoretical Physics through a grant from the John Templeton foundation. Research at Perimeter Institute is supported by the Government of Canada through Industry Canada and by the Province of Ontario through the Ministry of Research and Innovation. JB thanks the Centre for Quantum Technologies for their support.


\begin{thebibliography}{9}

\bibitem{BFL} J.\ Baez, T.\ Fritz and T.\ Leinster, A characterization of entropy in terms of information loss, \textsl{Entropy} \textbf{13} (2011), 1945--1957.  Also available as \href{http://arxiv.org/abs/1106.1791}{arXiv:1106.1791}. 

\bibitem{Fritz} T.\ Fritz, Convex spaces I: definition and examples, available as \href{http://arxiv.org/abs/0903.5522}{arXiv:0903.5522}.

\bibitem{IB} L.\ Itti, P.F.\ Baldi, Bayesian surprise attracts human attention, in \textsl{Advances in Neural Information Processing Systems} 19 (2005), 547--554. Also available as \href{http://ilab.usc.edu/publications/doc/Itti_Baldi06nips.pdf}{http://ilab.usc.edu/publications/doc/Itti\_Baldi06nips.pdf}.

\bibitem{Leinster2} T.\ Leinster, \textsl{Higher Operads, Higher Categories}, 
London Mathematical Society Lecture Note Series \textbf{298}, 
Cambridge U.\ Press, Cambridge, 2004.  Also available as \href{ http://arxiv.org/abs/math.CT/0305049}{arxiv:math.CT/0305049}.

\bibitem{Leinster} T.\ Leinster, An operadic introduction to entropy, \textsl{The $n$-Category Caf\'e}, 18 May 2011.  Available at \href{http://golem.ph.utexas.edu/category/2011/05/an_operadic_introduction_to_en.html}{http://golem.ph.utexas.edu/category/2011/05/an\_operadic\_introduction\_to\_en.html}.

\bibitem{Leinster3} T.\ Leinster, The categorical origins of Lebesgue integration,
talk at Category Theory 2014, 4 July 2014.  Available at \href{http://www.maths.ed.ac.uk/~tl/cambridge_ct14/}{http://www.maths.ed.ac.uk/$\sim$tl/cambridge\_ct14/}

\bibitem{Kuczma} M.\ Kuczma, \textsl{An Introduction to the Theory of Functional Equations and Inequalities: Cauchy's Equation and Jensen's Inequality}, Birkh\"auser, Basel, 2009.

\bibitem{Petz}  D.\ Petz, Characterization of the relative entropy of states of matrix algebras, \textsl{Acta Math.\ Hungar.\ }  \textbf{59} (1992), 449--455.  Also available at \href{http://www.renyi.hu/~petz/pdf/52.pdf}{http://www.renyi.hu/$\sim$petz/pdf/52.pdf}.

\bibitem{Petzbook1} D.\ Petz, \textsl{Quantum entropy and its use}, Texts and Monographs in Physics, Springer (1993).

\end{thebibliography}
\end{document}